\theoremstyle{definition}
\theoremstyle{remark}
\theoremstyle{corollary}
\theoremstyle{example}
\theoremstyle{remark}
\newtheorem*{rem}{Remark}
\newtheorem{thm}{Theorem}
\newtheorem{prop}[thm]{Proposition}
\newtheorem{cor}{Corollary}
\title{Randomized Multiple Model Multiple Hypothesis Tracking}
\author{Haiqi Liu, Xiaojing Shen, Zhiguo Wang, Fanqin Meng, Junfeng Wang, Pramod, Varshney
%%%%\thanks{This work was supported in part by Postdoctor Science Foundation of P. R. China  (\#20070420221) and the
%%%%NNSF of China through grants 60374025, 60574032 and Project 863
%%%%through grant 2006AA12A104.}
\thanks{This work was supported in part by the NSFC No. 61673282, U1836103 and the PCSIRT16R53.} %60934009
\thanks{Haiqi Liu, Xiaojing Shen (corresponding author), Zhiguo Wang are with School of Mathematics, Sichuan University, Chengdu, Sichuan 610064, China. E-mail: haiqiliu0330@163.com, shenxj@scu.edu.cn, wangzg315@126.com; Junfeng Wang is with the School of Computer Science, Sichuan University, Chengdu, Sichuan 610064, China. E-mail:wangjf@scu.edu.cn.}% <-this % stops a space
\thanks{Fanqin Meng is with School of Aeronautics and Astronautics, Sichuan University, Chengdu, Sichuan 610064, China, also with School of Automation and Information Engineering, Sichuan University of Science and Engineering, Yibin, Sichuan 644000, China. E-mail: mengfanqin2008@163.com.}
\thanks{P. K. Varshney is with the Department of Electrical Engineering and Computer Science, Syracuse University, Syracuse, NY 13244 USA. E-mail: varshney@syr.edu.}
}
\begin{document}
 \maketitle
\begin{abstract}
	This paper considers the data association problem
	for multi-target tracking. Multiple hypothesis tracking is a popular algorithm for solving this problem but it is NP-hard and is is quite complicated for a large number of targets or for tracking maneuvering targets.
	To improve tracking performance and enhance robustness, we propose a randomized multiple model multiple hypothesis tracking method, which has three distinctive advantages.
	First, it yields a randomized data association solution which maximizes the expectation of the logarithm of the posterior probability and can be solved efficiently by linear programming.
	Next, the state estimation performance is improved by the random coefficient matrices Kalman filter, which mitigates the difficulty introduced by randomized data association, i.e., where the coefficient matrices of the dynamic system are random.
	Third, the probability that the target follows a specific dynamic model is derived by jointly optimizing the multiple possible models and data association hypotheses, and it does not require prior mode transition probabilities. Thus, it is more robust for tracking multiple maneuvering targets.
	Simulations demonstrate the efficiency and superior results of the proposed algorithm over interacting multiple model multiple hypothesis tracking.
\end{abstract}

\noindent{\bf keywords:} Multi-target tracking; data association; random coefficient matrices Kalman filter; linear programming

\section{Introduction}
Multi-target tracking (MTT) is an extremely important problem in target tracking  \cite{bar1990multitarget,Vo2015Multitarget}, %TAC2015Beard
where a complicating factor is the uncertainty of measurement origin namely the data association problem of partitioning measurements into false alarms and true detections \cite{Meyer2018,Zhu2002}. In MTT problems, the number of individual targets and measurements may change over time, since the targets or clutter appear and disappear in the surveillance region.
With multiple targets, the problem becomes even more complicated when there are maneuvering targets leading to the multiple maneuvering target tracking (MMTT) problem. MMTT is much more  difficult than MTT since its challenge is not only the data association problem, but also the uncertainty of target dynamics.

A number of data association algorithms have been developed in the last forty years \cite{Mori2018_ICIF}. One of the earliest data association method is the nearest neighbor (NN) method, which was proposed by Singer et al. \cite{singer1974derivation}. Due to the occurrence of inexact associations in dense clutter environments, some improved algorithms based on NN have been developed, such as the global nearest neighbor (GNN) method \cite{Blackman1986Multiple}. Although GNN considers all possible associations and finds the best association, it exhibits poor performance in MTT scenarios with targets crossing paths.

Probabilistic data association (PDA) \cite{Bar1975Tracking,bar1983consistency} and joint PDA (JPDA) \cite{BarShalom1984JPDA,BarShalom2011Tracking} are probabilistic association approaches to estimate the target state by using the weights based on the Bayesian posterior probability of the valid measurements and target tracks. The difference between PDA and JPDA lies in the way that the association probabilities are calculated. While PDA calculates the association probabilities separately, JPDA computes the joint association probabilities of all the targets.
The computational complexity of the JPDA algorithm increases exponentially with the number of targets and measurements \cite{Vo2015Multitarget}. To handle the issue of high complexity, various approaches have been put forward, such as in \cite{li1999an,TAC2009Oh,williams2014approximate,Luo2012Novel}.

The random finite set (RFS) approach models the multi-target state as an RFS, which propagates the posterior multi-target density \cite{Vo2015Multitarget}.
Although the RFS methodology is a powerful tool, the numerical complexity hinders its practical application.
The probability hypothesis density (PHD) and cardinalized PHD filters have been developed as approximations of the RFS approach, which propagate moments and cardinality distributions recursively \cite{Mahler2003phd,TSP2007VoCPHD,TSP2011MahlerCPHD}.
Other approximations such as the multi-Bernoulli and generalized labeled multi-Bernoulli (GLMB) filters propagate the parameters of a multi-Bernoulli distribution that approximate the posterior multi-target density \cite{TSP2014VoLMBF,hoang2014sensor,TSP2018VoLRFS,TSP2018VoGLMB}.
The sequential Monte Carlo and Gaussian mixture implementations for these filters have been proposed in \cite{Vo2005SMCRFS,TSP2006VoGMPHD,TSP2017VoGLMB}.
The problem of high complexity in these filters still exists, especially for scenarios with a large number of targets.

Multiple hypothesis tracking (MHT) makes a delayed decision regarding measurement-target associations \cite{BarShalom2011Tracking,Blackman2004MHT,Reid1979An,Kuiren1990Issues,Mallick2012MHT}. It seeks to find the highest posterior probability or track score among a set of association hypotheses over a sliding window of consecutive time steps \cite{Vo2015Multitarget,Meyer2018}.
From the viewpoints of track maintenance and cost effectiveness, the track-oriented MHT (TOMHT) \cite{BarShalom2011Tracking,Kuiren1990Issues,Mallick2012MHT} is a popular version of MHT \cite{Vo2015Multitarget}. The TOMHT determines the best global hypothesis by solving an NP-hard multidimensional assignment (MDA) problem \cite{Poore1994Multidimensional}. Its computational complexity increases exponentially with time.
Two approaches can be used to solve the MDA problem, namely the Lagrangian relaxation algorithm \cite{Pattipati2000,Poore1997A} and  approximate linear programming (LP) \cite{Coraluppi2000,Storms2003An}.
Nevertheless, both of them are suboptimal.
The probabilistic MHT (PMHT) is an efficient approach that estimates the target states over a batch of frames via the expectation-maximization (EM) algorithm \cite{PMHT1995,willett2002pmht:}. It reformulates the measurement model of the association problem by removing the requirement that each target is associated with only one measurement. It simplifies computation, but sacrifices estimation performance \cite{Vo2015Multitarget}.

When maneuvering target tracking is considered, multi-model (MM) methods are the most accepted and mainstream approaches to maneuvering target tracking.
Among the variants of the MM method, the interacting multiple model (IMM) method \cite{IMM1988BlomBarShalom,bar2004estimation,li2005survey,BarShalom2011Tracking,ho2011switched} has shown good performance and has been applied in many practical scenarios. The IMM algorithm employs a Markov transition probability matrix to describe the switch between models, and then state estimation is obtained via three steps: interaction, filtering, and combination.
In the framework of PDA and JPDA, interacting multiple model probabilistic data association and interacting multiple model joint probabilistic data association methods have been suggested to handle data association with maneuvering targets \cite{houles1989multisensor,blom2006exact,Bar1991IMM}.
In the framework of MHT, interacting multiple model multiple hypothesis tracking (IMM-MHT) \cite{blackman1999immmht} is a popular method for MMTT.

In this paper, we consider the MMTT problem from the viewpoint of randomized data association, with aims to reduce the computational complexity and improve the association performance and robustness of MMTT.
Our main contributions are summarized as follows:
\begin{itemize}
	\item
	We propose a novel randomized multiple model MHT (RMM-MHT) method by leveraging LP and the random coefficient matrices Kalman filter (RCMKF).
	It has a low computational complexity so that it is suitable for tracking of a large number of targets.
	\item
	The difficulty introduced by randomized data association is that the coefficient matrices of the dynamic system are random, and the traditional RCMKF cannot be employed directly.
	We deal with the difficulty by generalizing the optimal RCMKF to obtain linear minimum variance recursive state estimation.
	\item
	The proposed RMM-MHT method is capable of tracking multiple maneuvering targets.
	It only requires a set of models and does not require the knowledge of prior mode transition probabilities.
	The probabilities of target maneuvering models are derived by jointly optimizing the multiple hypothesis model and data association.	
\end{itemize}

Numerical example demonstrates the superior performance of the proposed algorithm in terms of the optimal sub-pattern assignment (OSPA) distance \cite{SchuhmacherA} and robustness with respect to mode transition probabilities. Our results validate the fact that the proposed RMM-MHT improves the tracking performance.

This paper also extends our earlier work \cite{Yi2017Random} in four aspects:
\begin{itemize}
	\item Proposes a unified framework that includes tracking of maneuvering targets.
	\item Generalizes RCMKF to estimate the state of maneuvering targets.
	\item Provides a detailed computation complexity analysis.
	\item Presents a simulation example that includes both closely spaced targets and maneuvering targets for comparison.
\end{itemize}

The organization of the rest of the paper is as follows: in Section \ref{sec_2}, some background and problem formulation are presented. The RMM-MHT algorithm is formulated in Section \ref{sec_3}. Simulation results are given in Section \ref{sec_4}. We provide our conclusions and possible future work in Section \ref{sec_5}.

\section{Problem Formulation}
\label{sec_2}
\subsection{Multiple Model Multiple Hypothesis Tracking and 0-1 Integer Programming}\label{sec_2_1}
MHT is a deferred decision approach to
data association-based MTT. At each observation time, the
MHT algorithm attempts to propagate and maintain a set
of association hypotheses with the highest track scores \cite{Vo2015Multitarget,Pattipati2000,Poore1997A}.

Assuming that at time $k$, the set of targets is denoted by
 $ \mathbf{T}_k $, i.e.,
\begin{equation}
\label{Eq_2.1} \mathbf{T}_k=\left\{\mathbf{x}^{\tau}_k:~ \tau=0,1,\cdots,T \right\},
\end{equation}
where $ \tau $ is the target index, $ \mathbf{x}^{\tau}_k $ represents the state vector of the target $ \tau $ at time $ k $, and $ T $ is the number of targets (the time index $ k $ is omitted for simplicity when there is no confusion).
A dummy target is introduced by the index $ \tau=0 $, i.e., $ \mathbf{x}^{0}_k $.
An $(N-1)$-scan procedure \cite{Kuiren1990Issues} is used to improve the accuracy of associations between the targets and the measurements, i.e., $N$ scans of measurements are received from time $k+1$ to $k+N$. Let us denote the set of measurements at $k+n$ by
\begin{equation}
\label{Eq_2.2} \mathbf{Z}_{k+n} = \left\{\mathbf{z}_{k+n}^{r_n}:~ r_n=0,1,\cdots,R_n \right\},~n=1,\cdots,N,
\end{equation}
where $ r_n $ is the measurement index, $R_n$ and $\mathbf{z}^{r_n}_{k+n}$ are the number of measurements and the $r_n $-th measurement vector received in the $n$-th scan, respectively.
Similarly, the index $ r_n=0 $ indicates a dummy measurement $ \mathbf{z}_{k+n}^{0} $ in the $ n $-th scan, where $ n=1,\cdots,N $. Notice that the concepts of the dummy target and dummy measurements are to deal with false alarms, missed detections, track initiation and track termination.
Moreover, for tracking of maneuvering targets, a set of dynamic models in each scan is defined as
\begin{equation}\label{Eq_2.3}
 \mathbf{M}_{k+n}=\left\{m_{k+n}^{s_{n}}:~s_{n}=1,\cdots,S_n \right\},~\text{for}~n=1,\cdots,N,
\end{equation}
where $ s_n $ is the model index, $ S_n $ and $ m_{k+n}^{s_n} $ are the number of models and the $ s_n $-th model in the $ n $-th scan, respectively.

In the framework of MHT, a local hypothesis is defined by the indices of a target, measurements and models as follows,
\begin{align}
\nonumber
(\tau,\mathbf{s},\mathbf{r})=(\tau,s_1,\cdots,s_N,r_1,\cdots,r_N),
\end{align}
where $ \tau $ is the target index, $ \mathbf{s}=(s_1,\cdots,s_N) $ is the model index vector that the target $ \tau $ executed in each scan, and  $\mathbf{r}=(r_1,\cdots,r_N)$ is the measurement index vector that is associated with the target $\tau$ in each scan. A global hypothesis is a set of local hypotheses, where each target and each measurement is associated with one local hypothesis.

In the single model MHT problem, the best global hypothesis is determined by maximizing the likelihood or the posterior probability, which is equivalent to a multidimensional assignment problem (MDA) problem \cite{Poore1994Multidimensional}. When we consider multiple models, a $ 0$-$1 $ integer programming (IP) solution to seek the best global hypothesis can be derived as follows.

Define a binary association variable for each local hypothesis, i.e.,
\begin{align}\label{Eq_2.5}
\delta_{(\tau, \mathbf{s}, \mathbf{r})}=\left\{
\begin{array}{rcl}
1 &  &\text{if}~ (\tau,\mathbf{s},\mathbf{r})~\text{is selected}, \\
0 &  &\text{otherwise}.
\end{array}
\right.
\end{align}
A global hypothesis requires the local hypotheses to satisfy the following constraints
\begin{align}\label{Eq_2.6}
&\sum_{\mathbf{s}}\sum_{\mathbf{r}}\delta_{(\tau,\mathbf{s},\mathbf{r})}=1,~\text{for}~\tau=1,\cdots,T,\\
\nonumber
&\sum_{\tau}\sum_{\mathbf{s}}\sum_{\mathbf{r}\setminus\{r_n\}}\delta_{(\tau,\mathbf{s},\mathbf{r})}=1,\\
\label{Eq_2.61}
&~\text{for}~r_n=1,\cdots,R_n~\text{and}~n=1,\cdots,N,
\end{align}
where
\begin{align}\nonumber
\sum\limits_{\mathbf{r}\setminus\{r_n\}}\{\cdot\}=\sum_{r_1}\cdots\sum_{r_{n-1}}\sum_{r_{n+1}}\cdots\sum_{r_N} \{\cdot\}.
\end{align}
Equations (\ref{Eq_2.6})-(\ref{Eq_2.61}) mean that each target and each measurement can be associated with only one local hypothesis, respectively. For each local hypothesis $ (\tau,\mathbf{s},\mathbf{r}) $, the track score or the posterior probability is calculated to evaluate its quality, which is denoted by $ L_{(\tau,\mathbf{s},\mathbf{r})} $. The specific form of $ L_{(\tau,\mathbf{s},\mathbf{r})} $ is given in Appendix A.
Thus, the objective function of multiple-model MHT is expressed in a product form as follows:
\begin{align}\label{Eq_2.7}
\max_{\delta_{(\tau,\mathbf{s},\mathbf{r})}}~ \prod_{\tau}\prod_{\mathbf{s}}\prod_{\mathbf{r}}\left[L_{(\tau,\mathbf{s},\mathbf{r})}\right]^{\delta_{(\tau,\mathbf{s},\mathbf{r})}}.
\end{align}
By taking the negative logarithm of the above objective function (\ref{Eq_2.7}), a linear objective function is obtained, i.e.,
\begin{align}\nonumber
\min_{\delta_{(\tau,\mathbf{s},\mathbf{r})}}~  \sum_{\tau}\sum_{\mathbf{s}}\sum_{\mathbf{r}}
\left[ -\log L_{(\tau,\mathbf{s},\mathbf{r})}\right] {\delta_{(\tau,\mathbf{s},\mathbf{r})}},
\end{align}
where $ -\log L_{(\tau,\mathbf{s},\mathbf{r})} $ is defined as the cost of the track $ (\tau,\mathbf{s},\mathbf{r}) $.
Letting
\begin{align}\label{cost}
C_{(\tau,\mathbf{s},\mathbf{r})}=-\log L_{(\tau,\mathbf{s},\mathbf{r})},
\end{align}
it can be easily seen that the multiple-model MHT problem is equivalent to a $ 0 $-$ 1 $ IP:
\begin{align}\label{Eq_2.8}
\nonumber \text{Minimize}~
&\sum_{\tau}\sum_{\mathbf{s}}\sum_{\mathbf{r}}C_{(\tau,\mathbf{s},\mathbf{r})}\delta_{(\tau,\mathbf{s},\mathbf{r})}\\
\nonumber
\text{Subject to:}\\
\nonumber
&\sum_{\mathbf{s}}\sum_{\mathbf{r}}\delta_{(\tau,\mathbf{s},\mathbf{r})}=1, ~\text{for}~ \tau=1,\cdots,T,\\
\nonumber
&\sum_{\tau}\sum_{\mathbf{s}}\sum_{\mathbf{r}\setminus\{r_n\}}\delta_{(\tau,\mathbf{s},\mathbf{r})}=1,~\\
\nonumber
&\text{for}~r_n=1,\cdots,R_n~\text{and}~n=1,\cdots,N,\\
&\delta_{(\tau,\mathbf{s},\mathbf{r})} \in\{0,1\}, ~\text{for all}~ (\tau,\mathbf{s},\mathbf{r}).
\end{align}
The $0$-$1$ IP problem $(\ref{Eq_2.8})$ is an NP-hard problem whose computational complexity grows exponentially with the problem size, and it reduces to an MDA problem when the model set only includes one model, i.e., $S_n=1$, for $ n=1,\cdots,N $.

\subsection{Random Coefficient Matrices Kalman Filters}\label{sec_2_2}
In order to deal with the MTT problem, we introduce the RCMKF \cite{Koning1984Optimal}. Consider a discrete time dynamic system
\begin{align}
% \nonumber to remove numbering (before each equation)
\label{Eq_2.9} &\mathbf{x}_{k+1} = \mathbf{F}_k\mathbf{x}_k+\mathbf{v}_k \\
\label{Eq_2.10} &\mathbf{z}_{k+1} = \mathbf{H}_{k+1}\mathbf{x}_{k+1}+\mathbf{w}_{k+1},
\end{align}
where $k$ is the time index, $\mathbf{x}_k$ is the system state, $\mathbf{z}_k$ is the measurement. $\mathbf{F}_k$ and $\mathbf{v}_k$ are the state transition matrix and the process noise, respectively. $\mathbf{H}_k$ and $\mathbf{w}_k$ are the measurement matrix and the measurement noise, respectively.
Here, we assume that the state equation (\ref{Eq_2.9}) is random with a discrete distribution, which has $S$ realizations, i.e.,
\begin{align}\label{Eq_2.11}
\nonumber  \mathbf{x}_{k+1} &= F_k^{(1)}\mathbf{x}_k+v_k^{(1)}~\text{with probability} ~~P_{1}^f\\
&=\cdots\\
\nonumber &= F_k^{(S)}\mathbf{x}_k+v_k^{(S)}~\text{with probability} ~~P_{S}^f,
\end{align}
where $F_k^{(s)}$ is a real matrix, which is the $s$-th realization of $\mathbf{F}_k$, and $v_k^{(s)}$ is the process noise, which is a random vector, for $ s=1,\cdots,S $.
Similarly, the measurement equation (\ref{Eq_2.10}) follows a discrete distribution with $M$ realizations, i.e.,
\begin{align}\label{Eq_2.12}
\mathbf{z}_{k+1}
\nonumber&= H_{k+1}^{(1)}\mathbf{x}_{k+1} +  w_{k+1}^{(1)} ~\text{with probability} ~~P_{1}^h\\
&= \cdots \\
\nonumber&= H_{k+1}^{(M)}\mathbf{x}_{k+1}+w_{k+1}^{(M)} ~\text{with probability} ~~P_{M}^h,
\end{align}
where $ H_{k+1}^{(m)} $ is a real matrix, which is the $m$-th realization of $ \mathbf{H}_{k+1} $, and $ w_{k+1}^{(m)} $ is the measurement noise, for $ m=1,\cdots,M $.
Notice that the state transition matrix $ \mathbf{F}_k $ and measurement matrix $ \mathbf{H}_{k+1} $ defined in $(\ref{Eq_2.11})$ and $(\ref{Eq_2.12})$ are random coefficient matrices, which are  different from those in the standard Kalman filter. Next, we consider the filtering problem for handling the dynamic system $(\ref{Eq_2.11})$ and $(\ref{Eq_2.12})$.

To derive the linear minimum variance recursive estimation formulation, it is required that the system satisfies the following conditions:
\begin{enumerate}
	\item $\{\mathbf{F}_k,\mathbf{H}_k,\{v^{(s)}_k\}_{s=1}^{S},\{w_{k}^{(m)}\}_{m=1}^M,k=0,1,2,\cdots\}$ are temporal sequences of independent random variables and the initial state $\mathbf{x}_0$ is independent of them.
	\item $\mathbf{\mathbf{x}_k}$ and $\{\mathbf{F}_k,\mathbf{H}_k, k=0,1,2,\cdots\}$ are mutually independent.
	\item The initial state $\mathbf{x}_0$, the noises $ \{v^{(s)}_k\}_{s=1}^{S}$ and $\{w_{k}^{(m)}\}_{m=1}^M $ have the following means and covariances,
	\begin{align}
	\nonumber
	&E(\mathbf{x}_0)=\mu_0,~~~ E(\mathbf{x}_0-\mu_0)(\mathbf{x}_0-\mu_0)'=\mathbf{P}_{0|0},  \\
	\nonumber
	&E(v_k^{(s)})=0,~~~ E(v^{(s)}_k(v^{(s)}_k)')=\mathbf{Q}_k^{(s)}, \\
	\nonumber
	&E(w^{(m)}_{k+1})=0,~~ E(w^{(m)}_{k+1}(w^{(m)}_{k+1})')=\mathbf{R}_{k+1}^{(m)},
	\end{align}
	where $ s=1,\cdots,S $ and $ m=1,\cdots,M $.
\end{enumerate}

We denote the following as
\begin{align}
% \nonumber to remove numbering (before each equation)
\label{Eq_2.13} \widetilde{\mathbf{F}}_k &= \mathbf{F}_k-\overline{\mathbf{F}},\\
\label{Eq_2.14} \widetilde{\mathbf{H}}_{k+1} &= \mathbf{H}_{k+1}-\overline{\mathbf{H}},
\end{align}
where $\overline{\mathbf{F}}=E(\mathbf{F}_k)$ and $\overline{\mathbf{H}}=E(\mathbf{H}_{k+1})$.
Substituting $(\ref{Eq_2.13})$ and $(\ref{Eq_2.14})$ into the dynamic system $(\ref{Eq_2.9})$ and $(\ref{Eq_2.10})$, we have
\begin{align}
\label{Eq_2.15} \mathbf{x}_{k+1} &= \overline{\mathbf{F}}\mathbf{x}_k+\widetilde{\mathbf{v}}_k \\
\label{Eq_2.16} \mathbf{z}_{k+1} &= \overline{\mathbf{H}}\mathbf{x}_{k+1}+\widetilde{\mathbf{w}}_{k+1},
\end{align}
where $ \widetilde{\mathbf{v}}_k = \mathbf{v}_k+\widetilde{\mathbf{F}}_k\mathbf{x}_k $ and $ \widetilde{\mathbf{w}}_{k+1} = \mathbf{w}_k+\widetilde{\mathbf{H}}_k\mathbf{x}_{k+1} $.
%\begin{align}
%\nonumber\widetilde{v}_k &= v_k+\widetilde{F}_k\mathbf{x}_k \\
%\nonumber\widetilde{w}_k &= w_k+\widetilde{H}_k\mathbf{x}_k.
%\end{align}
Note that the noise process $\widetilde{\mathbf{v}}_k$ and $\widetilde{\mathbf{w}}_{k+1}$ are zero-mean, and their covariances are denoted by $\widetilde{\mathbf{Q}}_k$ and $\widetilde{\mathbf{R}}_{k+1}$, respectively.

De Koning \cite{Koning1984Optimal} proposed the linear minimum variance recursive estimation formulation for the linear discrete time dynamic system with random state transition and measurement matrices. Such recursive estimation formulation only considers the case of identical process and measurement noises, which is not suitable for the problem stated in $(\ref{Eq_2.11})$ and $(\ref{Eq_2.12})$.
%In practice, the state equations $(\ref{Eq_2.9})$ and $(\ref{Eq_2.10})$ satisfy the needs of tracking maneuvering targets and randomized data association.
To this end, the recursive estimation formulation of the new system $(\ref{Eq_2.15})$ and $(\ref{Eq_2.16})$ is generalized as follows:
\begin{prop}\label{pro_1}
	The linear minimum variance recursive state estimation of system $(\ref{Eq_2.15})$ and $(\ref{Eq_2.16})$ is given by
	\begin{align}
	\label{RCKF_1}
	&\mathbf{x}_{k+1|k+1} = \mathbf{x}_{k+1|k}+\mathbf{K}_{k+1}(\mathbf{z}_{k+1}- \overline{\mathbf{H}} \mathbf{x}_{k+1|k})\\
	\label{RCKF_2}
	&\mathbf{x}_{k+1|k} = \overline{\mathbf{F}} \mathbf{x}_{k|k} \\
	\label{RCKF_3}
	&\mathbf{P}_{k+1|k} = \overline{\mathbf{F}} \mathbf{P}_{k|k} \overline{\mathbf{F}}' +\widetilde{\mathbf{Q}}_k\\
	\label{RCKF_4}
	&\mathbf{K}_{k+1} = \mathbf{P}_{k+1|k} \overline{\mathbf{H}}' (\overline{\mathbf{H}}
    \mathbf{P}_{k+1|k} \overline{\mathbf{H}}' +\widetilde{\mathbf{R}}_{k+1})^\dagger \\
	&\label{RCKF_5}   \mathbf{P}_{k+1|k+1} = (\mathbf{I}-\mathbf{K}_{k+1}\overline{\mathbf{H}})\mathbf{P}_{k+1|k}\\
	\label{RCKF_6}
	&\widetilde{\mathbf{Q}}_k = \sum\nolimits_{s} P_s^f \left\{ \mathbf{Q}_k^{(s)} +(F^{(s)}_k-\overline{\mathbf{F}}) E(\mathbf{x}_k\mathbf{x}_k') (F^{(s)}_k-\overline{\mathbf{F}})'\right\}\\
	\label{RCKF_7}
	&\widetilde{\mathbf{R}}_{k+1} = \sum\nolimits_{m}P_m^h \left\{\mathbf{R}_{k+1}^{(m)} +(H^{(m)}_{k+1}-\overline{\mathbf{H}}) E(\mathbf{x}_{k+1}\mathbf{x}_{k+1}') (H^{(m)}_{k+1}-\overline{\mathbf{H}})'\right\}\\
	\label{RCKF_8}
	&E(\mathbf{x}_{k+1}\mathbf{x}_{k+1}') = \overline{\mathbf{F}} E(\mathbf{x}_k\mathbf{x}_k') \overline{\mathbf{F}}' + \widetilde{\mathbf{Q}}_k \\
	\nonumber
	&\mathbf{x}_{0|0} = E\mathbf{x}_0,~\mathbf{P}_{0|0}=\text{Cov}(\mathbf{x}_0),\\
    &E(\mathbf{x}_0\mathbf{x}_0') = E\mathbf{x}_0E\mathbf{x}_0'+\mathbf{P}_{0|0}.
	\end{align}
\end{prop}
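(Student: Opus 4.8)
The plan is to show that the change of variables $(\ref{Eq_2.13})$--$(\ref{Eq_2.16})$ turns the random-coefficient system into an \emph{ordinary} linear system $\mathbf{x}_{k+1}=\overline{\mathbf{F}}\mathbf{x}_k+\widetilde{\mathbf{v}}_k$, $\mathbf{z}_{k+1}=\overline{\mathbf{H}}\mathbf{x}_{k+1}+\widetilde{\mathbf{w}}_{k+1}$ whose pseudo-noises $\widetilde{\mathbf{v}}_k$ and $\widetilde{\mathbf{w}}_{k+1}$ are zero-mean and, crucially, satisfy exactly the second-order (orthogonality) properties required for the classical Kalman recursion to produce the linear minimum variance estimator. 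Since linear minimum variance estimation is determined by first and second moments alone, it suffices to verify these orthogonality relations; full independence of the pseudo-noise from the state is neither available (indeed $\widetilde{\mathbf{v}}_k$ carries the factor $\widetilde{\mathbf{F}}_k\mathbf{x}_k$) nor needed.

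First I would record the elementary moment facts. Using $E(\widetilde{\mathbf{F}}_k)=0$ and $E(\widetilde{\mathbf{H}}_{k+1})=0$ together with Conditions 1--2 (the coefficient matrices are independent across time and independent of the states, and the noise realizations have zero mean), I obtain $E(\widetilde{\mathbf{v}}_k)=0$, $E(\widetilde{\mathbf{w}}_{k+1})=0$, and more importantly the orthogonality relations $E(\widetilde{\mathbf{v}}_k\mathbf{x}_k')=0$ and $E(\widetilde{\mathbf{v}}_k\mathbf{z}_j')=0$ for $j\le k$, and $E(\widetilde{\mathbf{w}}_{k+1}\mathbf{x}_{k+1}')=0$ and $E(\widetilde{\mathbf{w}}_{k+1}\mathbf{z}_j')=0$ for $j\le k$. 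Each of these holds because the matrix fluctuation can be pulled out of the expectation by independence and then annihilated by its zero mean, while the additive noise is uncorrelated with the state by Condition 1. These relations guarantee that $\widetilde{\mathbf{v}}_k$ is orthogonal to the filtered estimate $\mathbf{x}_{k|k}$, hence to the filtering error $\mathbf{x}_k-\mathbf{x}_{k|k}$, and likewise that $\widetilde{\mathbf{w}}_{k+1}$ is orthogonal to the prediction error $\mathbf{x}_{k+1}-\mathbf{x}_{k+1|k}$; these are precisely the hypotheses under which the prediction and update formulas hold without any correlated-noise correction.

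Next I would compute the pseudo-noise covariances. Expanding $\widetilde{\mathbf{v}}_k\widetilde{\mathbf{v}}_k'=(\mathbf{v}_k+\widetilde{\mathbf{F}}_k\mathbf{x}_k)(\mathbf{v}_k+\widetilde{\mathbf{F}}_k\mathbf{x}_k)'$ and taking expectations, the two cross terms vanish (the additive noise has zero mean conditional on the selected model and is independent of $\mathbf{x}_k$), leaving $\sum_s P_s^f\mathbf{Q}_k^{(s)}$ from the noise and $E(\widetilde{\mathbf{F}}_k\mathbf{x}_k\mathbf{x}_k'\widetilde{\mathbf{F}}_k')=\sum_s P_s^f(F_k^{(s)}-\overline{\mathbf{F}})E(\mathbf{x}_k\mathbf{x}_k')(F_k^{(s)}-\overline{\mathbf{F}})'$ from the matrix fluctuation, where independence of $\mathbf{F}_k$ and $\mathbf{x}_k$ lets me factor the state \emph{second moment} $E(\mathbf{x}_k\mathbf{x}_k')$ out. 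This yields $(\ref{RCKF_6})$, and the identical computation on $\widetilde{\mathbf{w}}_{k+1}$ yields $(\ref{RCKF_7})$. The appearance of the raw second moment $E(\mathbf{x}_k\mathbf{x}_k')$ rather than a covariance is the distinctive feature here, since the random matrix multiplies the entire state; to make the recursion self-contained I would then propagate it by taking the second moment of $\mathbf{x}_{k+1}=\overline{\mathbf{F}}\mathbf{x}_k+\widetilde{\mathbf{v}}_k$ and invoking $E(\widetilde{\mathbf{v}}_k\mathbf{x}_k')=0$, which gives exactly $(\ref{RCKF_8})$.

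Finally, with the transformed system now a bona fide linear model driven by zero-mean pseudo-noises that are orthogonal to the past measurements and to the relevant errors, I would invoke the standard linear minimum variance (orthogonal-projection) derivation: projecting $\mathbf{x}_{k+1}$ onto the span of $\{1,\mathbf{z}_1,\ldots,\mathbf{z}_k\}$ gives the prediction $(\ref{RCKF_2})$--$(\ref{RCKF_3})$, forming the innovation $\mathbf{z}_{k+1}-\overline{\mathbf{H}}\mathbf{x}_{k+1|k}$ and choosing the gain $\mathbf{K}_{k+1}$ by the orthogonality principle gives $(\ref{RCKF_1})$, $(\ref{RCKF_4})$ and $(\ref{RCKF_5})$, with the Moore--Penrose pseudo-inverse in $(\ref{RCKF_4})$ covering a possibly singular innovation covariance. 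I expect the main obstacle to be the covariance bookkeeping in the third step --- specifically, verifying that the cross terms in $\widetilde{\mathbf{Q}}_k$ and $\widetilde{\mathbf{R}}_{k+1}$ genuinely vanish given that each model index selects the pair $(F_k^{(s)},v_k^{(s)})$ \emph{jointly}, and confirming that it is the full second moment $E(\mathbf{x}\mathbf{x}')$, correctly recursed by $(\ref{RCKF_8})$, that must enter --- whereas the projection step is then a routine transcription of the classical argument.
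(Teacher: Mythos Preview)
Your proposal is correct and follows essentially the same strategy as the paper: reduce the random-coefficient system to an ordinary linear system with deterministic matrices $\overline{\mathbf{F}},\overline{\mathbf{H}}$ and zero-mean pseudo-noises $\widetilde{\mathbf{v}}_k,\widetilde{\mathbf{w}}_{k+1}$, verify the second-order properties needed for the classical Kalman recursion, compute $\widetilde{\mathbf{Q}}_k,\widetilde{\mathbf{R}}_{k+1}$ explicitly to obtain $(\ref{RCKF_6})$--$(\ref{RCKF_7})$, and then invoke the standard result. The only cosmetic difference is that the paper packages the verification as the classical white-noise hypotheses ($E(\widetilde{\mathbf{v}}_k\widetilde{\mathbf{v}}_l')=\widetilde{\mathbf{Q}}_k\delta_{k,l}$, $E(\widetilde{\mathbf{v}}_k\widetilde{\mathbf{w}}_l')=0$, $E(\mathbf{x}_0\widetilde{\mathbf{v}}_k')=0$, etc.), whereas you phrase it as orthogonality of the pseudo-noises to past measurements and to the relevant state/error; these are equivalent formulations of the same reduction.
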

% \begin{proof}
% 	See Appendix A.
% \end{proof}
Notice that the generalized RCMKF in Proposition \ref{pro_1} is a Kalman type filter. Equations (\ref{RCKF_1})-(\ref{RCKF_5}) are similar to the standard Kalman filter. The main difference is the calculations of $ \widetilde{\mathbf{Q}}_k $ and $ \widetilde{\mathbf{R}}_{k+1} $, i.e.,  (\ref{RCKF_6})-(\ref{RCKF_7}), which result from the randomness of $ \mathbf{F}_k $ and $ \mathbf{H}_{k+1} $. Both of them require the recursive calculation of $ E(\mathbf{x}_{k}\mathbf{x}_{k}') $ by (\ref{RCKF_8}).

\section{Randomized Multiple Model MHT Algorithm}\label{sec_3}
%\subsection{General RMHT algorithm}\label{sec_4_1}
\subsection{Randomized Data Association Decision}
In this section, we consider the data association problem from the point of view of randomized association decisions.
Define the probability of the local hypothesis $ (\tau,\mathbf{s},\mathbf{r}) $ by $ P_{(\tau,\mathbf{s},\mathbf{r})} $, which satisfies the following constraints
\begin{align}\label{Eq_3.18}
&\sum_{\mathbf{s}}\sum_{\mathbf{r}}P_{(\tau,\mathbf{s},\mathbf{r})}=1,~\text{for}~\tau=1,\cdots,T,\\
\nonumber
&\sum_{\tau}\sum_{\mathbf{s}}\sum_{\mathbf{r}\setminus\{r_n\}}P_{(\tau,\mathbf{s},\mathbf{r})}=1,\\
\label{Eq_3.181}
&\text{for}~r_n=1,\cdots,R_n~\text{and}~n=1,\cdots,N,\\
\label{Eq_3.182}
&0\leq P_{(\tau,\mathbf{s},\mathbf{r})} \leq 1~\text{for all}~(\tau,\mathbf{s},\mathbf{r}).
\end{align}
Equations (\ref{Eq_3.18})-(\ref{Eq_3.182}) mean that the sum of the probabilities of each target and each measurement being associated with possibly multiple possible local hypotheses is equal to 1. Different from the deterministic binary association variable (\ref{Eq_2.5}), the randomized decision variable is defined in terms of the probability $P_{(\tau,\mathbf{s},\mathbf{r})}$ as follows,
\begin{align}\label{Eq_3.17}
\delta_{(\tau,\mathbf{s},\mathbf{r})}= \left\{
\begin{array}{rcl}
1& &\text{with probability}~ P_{(\tau,\mathbf{s},\mathbf{r})} \\
0& &\text{with probability}~ 1- P_{(\tau,\mathbf{s},\mathbf{r})},
\end{array}\right.
\end{align}
which selects the local hypothesis $ (\tau,\mathbf{s},\mathbf{r})$ with probability $P_{(\tau,\mathbf{s},\mathbf{r})}$.

For randomized decisions, one may maximize the posterior probability or the logarithm of the posterior probability or minimize the maximum loss or minimize the average loss (Bayes risk). Here, we maximize the expectation of the logarithm of the posterior probability (\ref{Eq_2.7}) with respect to the randomized decisions $\delta_{(\tau,\mathbf{s},\mathbf{r})}$, i.e.,
\begin{align}\label{Eq_3.19}
\nonumber
&\max_{\delta_{(\tau,\mathbf{s},\mathbf{r})}}~ E_{\delta_{(\tau,\mathbf{s},\mathbf{r})}}\left(\log\left[\prod_{\tau}\prod_{\mathbf{s}}\prod_{\mathbf{r}} \left[ L_{(\tau,\mathbf{s},\mathbf{r})}\right] ^{\delta_{(\tau,\mathbf{s},\mathbf{r})}}\right]\right)\\
=~&\min_{\delta_{(\tau,\mathbf{s},\mathbf{r})}}~ E_{\delta_{(\tau,\mathbf{s},\mathbf{r})}}\left(\sum_{\tau}\sum_{\mathbf{s}}\sum_{\mathbf{r}}\left[-\log L_{(\tau,\mathbf{s},\mathbf{r})} \right]{\delta_{(\tau,\mathbf{s},\mathbf{r})}}\right).
\end{align}
From (\ref{Eq_3.17}), we have $E(\delta_{(\tau,\mathbf{s},\mathbf{r})})= P_{(\tau,\mathbf{s},\mathbf{r})}$. Thus, the objective function (\ref{Eq_3.19}) is equivalent to
\begin{align}
\nonumber
\min_{P_{(\tau,\mathbf{s},\mathbf{r})}}~ \sum_{\tau}\sum_{\mathbf{s}}\sum_{\mathbf{r}}\left[-\log L_{(\tau,\mathbf{s},\mathbf{r})} \right]{P_{(\tau,\mathbf{s},\mathbf{r})}}.
\end{align}
Similarly, with the constraints (\ref{Eq_3.18})-(\ref{Eq_3.181}) and the cost coefficient (\ref{cost}), the RMM-MHT problem is converted into a linear programming (LP) formulation as follows:
\begin{align}\label{Eq_3.21}
\nonumber \min~
&\sum_{\tau}\sum_{\mathbf{s}}\sum_{\mathbf{r}}C_{(\tau,\mathbf{s},\mathbf{r})} P_{(\tau,\mathbf{s},\mathbf{r})}\\
\nonumber
\text{Subject to:}\\
\nonumber
&\sum_{\mathbf{s}}\sum_{\mathbf{r}}P_{(\tau,\mathbf{s},\mathbf{r})}=1,~\text{for}~\tau=1,\cdots,T,\\
\nonumber
&\sum_{\tau}\sum_{\mathbf{s}}\sum_{\mathbf{r}\setminus\{r_n\}}P_{(\tau,\mathbf{s},\mathbf{r})}=1,\\
\nonumber
&~\text{for}~r_n=1,\cdots,R_n~\text{and}~n=1,\cdots,N,\\
& ~0\leq P_{(\tau,\mathbf{s},\mathbf{r})} \leq 1~\text{for all}~(\tau,\mathbf{s},\mathbf{r}).
\end{align}
Notice that this LP problem (\ref{Eq_3.21}) is the relaxed version of problem (\ref{Eq_2.8}), which replaces the nonconvex $\{0,1\}$ constraints by the convex interval $[0,1]$ constraints.
It can be solved efficiently. Moreover, the performance of RMM-MHT in the data association step can be guaranteed by the following result.

\begin{prop}
    Under the criterion of maximizing the expectation of the logarithm of the posterior probability, the optimal objective function value of the optimization problem (\ref{Eq_3.21}) based on randomized association decision is not worse than that of the optimization problem (\ref{Eq_2.8}) based on deterministic association decision.
\end{prop}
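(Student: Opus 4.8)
The plan is to read the statement as a standard linear-programming relaxation bound. Both the deterministic problem (\ref{Eq_2.8}) and the randomized problem (\ref{Eq_3.21}) minimize the identical linear objective $\sum_{\tau}\sum_{\mathbf{s}}\sum_{\mathbf{r}} C_{(\tau,\mathbf{s},\mathbf{r})}(\cdot)$ subject to the identical family of equality constraints; the only difference is the domain of the decision variable, namely $\delta_{(\tau,\mathbf{s},\mathbf{r})}\in\{0,1\}$ versus $P_{(\tau,\mathbf{s},\mathbf{r})}\in[0,1]$. First I would make precise that the two optimal values are comparable, by recalling the derivation leading to (\ref{Eq_3.21}): since $E(\delta_{(\tau,\mathbf{s},\mathbf{r})})=P_{(\tau,\mathbf{s},\mathbf{r})}$, the expected log-posterior criterion (\ref{Eq_3.19}) reduces exactly to the linear cost $\sum C_{(\tau,\mathbf{s},\mathbf{r})}P_{(\tau,\mathbf{s},\mathbf{r})}$, whereas for a deterministic decision the expectation is trivial and the same criterion reduces to $\sum C_{(\tau,\mathbf{s},\mathbf{r})}\delta_{(\tau,\mathbf{s},\mathbf{r})}$. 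Hence both optimal values measure the same functional, so the phrase ``not worse than'' is meaningful.

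The key step is a feasible-set containment. Given any $\{\delta_{(\tau,\mathbf{s},\mathbf{r})}\}$ feasible for (\ref{Eq_2.8}), I would set $P_{(\tau,\mathbf{s},\mathbf{r})}:=\delta_{(\tau,\mathbf{s},\mathbf{r})}$ and observe that this point is feasible for (\ref{Eq_3.21}): the two equality constraints coincide verbatim, and $\{0,1\}\subseteq[0,1]$ guarantees the box constraint $0\le P_{(\tau,\mathbf{s},\mathbf{r})}\le 1$. Equivalently, every deterministic association is a degenerate randomized association whose probabilities happen to lie in $\{0,1\}$. Consequently the feasible region of the integer program (\ref{Eq_2.8}) is a subset of the feasible region of the LP (\ref{Eq_3.21}), and the objective takes the same value at the embedded point.

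From here the conclusion is immediate: minimizing a fixed objective over a larger feasible set cannot increase the optimal value, so the optimal cost of (\ref{Eq_3.21}) is at most the optimal cost of (\ref{Eq_2.8}). Because the cost satisfies $C_{(\tau,\mathbf{s},\mathbf{r})}=-\log L_{(\tau,\mathbf{s},\mathbf{r})}$ by (\ref{cost}), a smaller optimal cost corresponds to a larger expected log-posterior; thus the randomized optimum is at least as good as the deterministic optimum, which is exactly the assertion.

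I do not expect a genuine obstacle here, since the result is essentially a one-line relaxation inequality. The only point requiring care is the bookkeeping in the first paragraph: one must verify that the two problems really optimize the same functional, via the identity $E(\delta_{(\tau,\mathbf{s},\mathbf{r})})=P_{(\tau,\mathbf{s},\mathbf{r})}$ and the collapse of the expectation in the deterministic case, so that comparing their optimal values is legitimate. Once that is settled, the containment argument finishes the proof.
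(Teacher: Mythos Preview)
Your proposal is correct and follows essentially the same approach as the paper: the paper's proof is a one-sentence relaxation argument observing that the feasible set and objective of (\ref{Eq_2.8}) are special cases of those of (\ref{Eq_3.21}) upon setting $P_{(\tau,\mathbf{s},\mathbf{r})}\in\{0,1\}$, hence the LP optimum cannot be worse. Your additional bookkeeping on the identity $E(\delta_{(\tau,\mathbf{s},\mathbf{r})})=P_{(\tau,\mathbf{s},\mathbf{r})}$ is a welcome clarification but does not change the substance of the argument.
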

\begin{proof}
	Since both the objective function and the constraint set of the optimization problem (\ref{Eq_2.8}) based on deterministic association decision are special cases of the optimization problem (\ref{Eq_3.21}) based on randomized association decision by setting $ P_{(\tau,\mathbf{s},\mathbf{r})}=0/1 $, the optimal objective function value of the problem (\ref{Eq_3.21}) is not worse than that of the problem (\ref{Eq_2.8}) under the criterion of maximizing the expectation of the logarithm of the posterior probability.
\end{proof}

\begin{rem}
	The definition of probabilities in RMM-MHT is different from that for PMHT \cite{PMHT1995}.
	RMM-MHT defines the probability for each local hypothesis (see (\ref{Eq_3.17})) and can be obtained by solving an LP (\ref{Eq_3.21}), while PMHT defines the probability for measurement/target association and obtains the state estimation by the EM algorithm with posterior probability \cite{PMHT1995,willett2002pmht:}.
	The PMHT does not employ the feasible association event constraint that a measurement can have only one source and at most one measurement can originate from a target \cite{Vo2015Multitarget,willett2002pmht:}. Also, it assumes the measurement/target association process as independent across measurements.
	%The PMHT does not impose the constraints (\ref{Eq_3.18})-(\ref{Eq_3.181}), and assumes the measurement/target association process as independent across measurements.
	By doing so, it is able to render a fully optimal (under the modified assumption) tracker \cite{PMHT1995,willett2002pmht:}.
\end{rem}

Note that, the difficulty introduced by randomized data association is that the state and measurement equations are random. We deal with the difficulty by the optimal RCMKF in Proposition 1.

\subsection{Random Measurement Equation}
The optimal solution of the problem $(\ref{Eq_3.21})$, i.e., randomized data association, yields the association probabilities of measurement to target assignment. Specifically, let $P_{\tau,r_1}^h$ denote the association probability that the measurement $z_{k+1}^{r_1}$ is associated with the target $\tau$, i.e.,
\begin{equation}\label{Eq_3.22}
P_{\tau,r_1}^h=\sum_{\mathbf{s}}\sum_{(r_2,\cdots,r_n)}P_{(\tau,\mathbf{s},\mathbf{r})}.
\end{equation}
Then, based on the association probability (\ref{Eq_3.22}), we define a random coefficient matrices measurement equation,
\begin{align*}
	\mathbf{z}_{k+1}^{r_1} = \mathbf{h}_{k+1}^{r_1} \mathbf{X}_{k+1} + w_{k+1}^{r_1},
\end{align*}
where
\begin{align*}
	\mathbf{X}_{k+1} = \left((\mathbf{x}_{k+1}^1)',\cdots,(\mathbf{x}_{k+1}^T)'\right)'
\end{align*}
is the stacked state vector, and
\begin{align*}
	\mathbf{h}_{k+1}^{r_1} &= \left(H_{k+1}',0,\cdots,0\right)',\quad \text{with probability $P_{1,r_1}^h$},\\
		&=\cdots\\
	&=\left(0,0,\cdots,H_{k+1}'\right)',\quad \text{with probability $P_{T,r_1}^h$}.
\end{align*}
is a random coefficient matrix. Moreover, the random coefficient matrix measurement equation for the stacked measurement vector $\mathbf{z}_{k+1}$ is
\begin{align}\label{Eq_3.24}
	\mathbf{z}_{k+1} = \mathbf{h}_{k+1}\mathbf{X}_{k+1}+\mathbf{w}_{k+1},
\end{align}
where
\begin{align}
	\nonumber&\mathbf{z}_{k+1} = \left((\mathbf{z}_{k+1}^{1})',\cdots,(\mathbf{z}_{k+1}^{R_1})'\right)',\\
	\nonumber&\mathbf{h}_{k+1} = \left( (\mathbf{h}_{k+1}^{1})',\cdots,(\mathbf{h}_{k+1}^{R_1})' \right),\\
	\nonumber&\mathbf{w}_{k+1} = \left((\mathbf{w}_{k+1}^1)',\cdots,(\mathbf{w}_{k+1}^{R_1})'\right)',\\
	\nonumber&\text{Cov}(\mathbf{w}_{k+1}) = \mathbf{R}_{k+1}.
\end{align}

\subsection{Random State Equation}
Similarly, the probability of the target $\tau$ following the $s_1$-th model derived from the optimal solution of problem $(\ref{Eq_3.21})$ is obtained by
\begin{align}\label{Eq_3.27}
	P_{\tau,s_1}^f=\sum_{(s_2,\cdots,s_N)}\sum_{\mathbf{r}}P_{(\tau,\mathbf{s},\mathbf{r})}.
	\end{align}
	Note that it is different from the IMM algorithm, the  probability $(\ref{Eq_3.27})$ comes from the optimal solution of problem $(\ref{Eq_3.21})$ by jointly optimizing the multiple possible models and data association hypotheses.
	With the probability $(\ref{Eq_3.27})$, the random state equation for each target is defined as follows,
	\begin{align}\label{Eq_3.29}
	\mathbf{x}_{k+1}^{\tau} &= \mathbf{F}^{\tau}_k\mathbf{x}_k^{\tau}+\mathbf{v}_k^{\tau},
	\end{align}
	where $ \tau=1,\cdots,T $, and
	\begin{align}
	\nonumber  \mathbf{x}_{k+1}^{\tau} &= F_k^{(1)}\mathbf{x}_k^{\tau}+v_k^{(1)}~\text{with probability} ~~P_{\tau,1}^f\\
	\nonumber  &=\cdots\\
	\nonumber &= F_k^{(S_1)}\mathbf{x}_k^{\tau}+v_k^{(S_1)}~\text{with probability} ~~P_{\tau,S_1}^f,
	\end{align}
	where $F_k^{(s_1)}$ is the state transition matrix, and $v_k^{(s_1)}$ is the process noise with zero-mean and covariance $\mathbf{Q}^{(s_1)}_k$, for $ s_1=1,\cdots,S_1 $.
	Thus, the random state equation for the stacked state is defined as
	\begin{equation}
	\label{Eq_3.30} \mathbf{X}_{k+1} = \mathbf{F}_k\mathbf{X}_k+\mathbf{v}_k,
	\end{equation}
	where
	\begin{align}
	\nonumber &\mathbf{F}_k = \text{diag}\left( \mathbf{F}_k^{1},\cdots,\mathbf{F}_k^{T}\right) , \\
	\nonumber
	&\mathbf{v}_k = \left( (\mathbf{v}_k^{1})',\cdots,(\mathbf{v}_k^{T})'\right),\\
	\nonumber &\text{Cov}(\mathbf{v}_k) = \mathbf{Q}_k.
	\end{align}
	
	\subsection{State Estimation}
	The randomized data association method results in the state equation $(\ref{Eq_3.30})$ and the measurement equation $(\ref{Eq_3.24})$ with random coefficient matrices. Again, similar to $ (\ref{Eq_2.13}) $ and $ (\ref{Eq_2.14}) $, we have
	\begin{align}
	\nonumber  &\widetilde{\mathbf{F}}_{k} = \mathbf{F}_{k}-\overline{\mathbf{F}},\\
	\nonumber &\tilde{\mathbf{h}}_{k+1} = \mathbf{h}_{k+1}-\bar{\mathbf{h}},
	\end{align}
	where $\overline{\mathbf{F}} = E(\mathbf{F}_k)$ and  $\bar{\mathbf{h}} = E(\mathbf{h}_{k+1})$. Thus, $(\ref{Eq_3.30})$ and $(\ref{Eq_3.24})$ are rewritten as
	\begin{align}
	\label{Eq_3.31}
	\mathbf{X}_{k+1} &= \overline{\mathbf{F}}\mathbf{X}_k+\tilde{\mathbf{v}}_{k},\\
	\label{Eq_3.32}
	\mathbf{z}_{k+1} &= \bar{\mathbf{h}}\mathbf{X}_{k+1}+\tilde{\mathbf{w}}_{k+1},
	\end{align}
	where
	\begin{align}
	\nonumber &\tilde{\mathbf{v}}_{k}=\mathbf{v}_{k}+\widetilde{\mathbf{F}}_{k}\mathbf{X}_{k},\\
	\nonumber &\tilde{\mathbf{w}}_{k+1}= \mathbf{w}_{k+1}+\tilde{\mathbf{h}}_{k+1}\mathbf{X}_{k+1}.
	\end{align}
	By Proposition \ref{pro_1}, the linear minimum variance recursive state estimate of the stacked state vector $\mathbf{X}_{k+1}$ is given by
	\begin{equation}
	\label{Eq_3.33}\mathbf{X}_{k+1|k+1} = \mathbf{X}_{k+1|k}+\mathbf{K}_{k+1}(\mathbf{z}_{k+1}-\bar{\mathbf{h}}\mathbf{X}_{k+1|k}),
	\end{equation}
	where
	\begin{align}
	\label{Eq_3.34}
	&\mathbf{X}_{k+1|k} = \overline{\mathbf{F}}\mathbf{X}_{k|k},\\
	\label{Eq_3.35}
	&\widetilde{\mathbf{Q}}_k = \mathbf{Q}_k+E(\widetilde{\mathbf{F}}_kE(\mathbf{X}_k\mathbf{X}_k')\widetilde{\mathbf{F}}_k'),\\
	\label{Eq_3.36}
	&\mathbf{P}_{k+1|k} = \overline{\mathbf{F}} \mathbf{P}_{k|k} \overline{\mathbf{F}}' +\widetilde{\mathbf{Q}}_k,\\
	\label{Eq_3.37}
	&E(\mathbf{X}_{k+1}\mathbf{X}_{k+1}') = \overline{\mathbf{F}} E(\mathbf{X}_{k}\mathbf{X}_{k}') \overline{\mathbf{F}}' +E(\widetilde{\mathbf{F}}_kE(\mathbf{X}_{k}\mathbf{X}_{k}')\widetilde{\mathbf{F}}_k')+\mathbf{Q}_k,\\
	\label{Eq_3.38}
	&\widetilde{\mathbf{R}}_{k+1} = \mathbf{R}_{k+1}+E(\tilde{\mathbf{h}}E(\mathbf{X}_{k+1}\mathbf{X}_{k+1}')\tilde{\mathbf{h}}'),\\
	\label{Eq_3.39}
	&\mathbf{K}_{k+1} = \mathbf{P}_{k+1|k}\bar{\mathbf{h}}'(\bar{\mathbf{h}}
	\mathbf{P}_{k+1|k}\bar{\mathbf{h}}'+\widetilde{\mathbf{R}}_{k+1})^\dagger,\\
	\label{Eq_3.40}
	&\mathbf{P}_{k+1|k+1} = (\mathbf{I}-\mathbf{K}_{k+1}\bar{\mathbf{h}})\mathbf{P}_{k+1|k}.
	\end{align}
	\begin{rem}
		Once the targets begin to share measurements, the estimate of the stacked state is considered as a correlated state estimate.
		It results in the fact that the estimation error covariance $ \mathbf{P}_{k+1|k+1} $ is not block diagonal. Thus, the correlation between the states of multiple targets affects the cross covariance terms of the estimation error covariance of the optimal state estimation RCMKF.
		%From Equations (\ref{Eq_3.33}) to (\ref{Eq_3.40}), the estimate $ \mathbf{X}_{k+1|k+1} $ and the estimate error covariance $ \mathbf{P}_{k+1|k+1} $ are ameliorated by the filter gain $ \mathbf{K}_{k+1} $.
		Besides, the analytical expressions of $ \widetilde{\mathbf{Q}}_k $ and $ \widetilde{\mathbf{R}}_{k+1} $ for the target $\tau$ are given by
		\begin{align}
		\nonumber
		&(\widetilde{\mathbf{Q}}_k)_{(\tau)}=\sum_{s_1=1}^{S_1}P_{\tau,s_1}^f \cdot \left[\mathbf{Q}^{(s_1)}_k+ (F^{(s_1)}_{k}-\overline{\mathbf{F}})\mathbf{X}_{\tau}^{k} (F^{(s_1)}_{k}-\overline{\mathbf{F}})'\right],\\
		\nonumber
		& (\widetilde{\mathbf{R}}_{k+1})_{(\tau)}=\sum_{r_1=1}^{M_{\tau}} P_{\tau,r_1}^h \cdot\left[\mathbf{R}^{(\tau)}_{k+1} + (h^{\tau,r_1}_{k+1}-\bar{\mathbf{h}}_{k+1}^{\tau}) E(\mathbf{X}_{k+1}\mathbf{X}_{k+1}') (h^{\tau,r_1}_{k+1}-\bar{\mathbf{h}}_{k+1}^{\tau})'\right],
		\end{align}
		where $\tau = 1,\cdots,T$, and $\mathbf{X}^k_{\tau}$ is the $\tau$-th diagonal block of $\mathbf{X}^k = E(\mathbf{X}_{k}\mathbf{X}_{k}')$, and $ F^{(s_1)}_{k} $ and $ h^{\tau,r_1}_{k+1} $ are the $ s_1 $-th and $ r_1 $-th realizations of $ \mathbf{F}^{\tau}_{k} $ and $ \mathbf{h}^{\tau}_{k+1} $, respectively.
	\end{rem}
	
	Therefore, the tracking result of each target is
	\begin{equation}
	\nonumber \mathbf{x}_{k+1|k+1}^{\tau} = \mathbf{X}_{k+1|k+1}(*\tau),~~\tau = 1,\cdots,T,
	\end{equation}
	where $\mathbf{X}_{k+1|k+1}(*\tau)$ is the $\tau$-th block of vector $\mathbf{X}_{k+1|k+1}$.
	
	Note that once the association probability is given, $(\ref{Eq_3.33})$-$(\ref{Eq_3.40})$ represent the linear minimum variance recursive state estimation method since the system $(\ref{Eq_3.31})$ and $(\ref{Eq_3.32})$ satisfy the conditions of Proposition \ref{pro_1}.
	
	\begin{rem}
		The major computationally intensive parts in the RMM-MHT algorithm are the LP problem $(\ref{Eq_3.21})$ and the matrix inversion of Equation $(\ref{Eq_3.39})$ with the worst-case computational complexity of $O((R_1\cdot m)^3)$, where $R_1$ is the number of valid measurements, and $m$ is the dimension of the measurement.
		In general, the computational complexity of the LP algorithm is much less than that of the Lagrangian relaxation algorithm in MHT.
		%Simulation results show that the computational complexity of the proposed RMM-MHT is much lower than that of MHT, see Tables  \ref{Tab1}-\ref{Tab3}.
	\end{rem}
	
	Furthermore, since the optimal solution of LP may not be in the form of $\{0,1\}$, one may be curious about the conditions that make the solutions of both problems $(\ref{Eq_2.8})$ and $(\ref{Eq_3.21})$ equivalent.
	\begin{prop}\label{pro_2}
		Let $\delta$ be a feasible solution of the $0$-$1$ IP problem $(\ref{Eq_2.8})$, and for $\tau=1,\cdots,T$, $\delta_{(\tau,\mathbf{s}^{\tau},\mathbf{r}^{\tau})}=1$, where $ (\tau,\mathbf{s}^\tau,\mathbf{r}^\tau)=(\tau,s_1^\tau,\cdots,s_N^\tau,r_1^\tau,\cdots,r_N^\tau) $.
		If the cost of $\delta$ satisfies:
		\begin{equation}\label{Eq_3.41}
			C_{(\tau,\mathbf{s}^{\tau},\mathbf{r}^{\tau})}=\min_{(\mathbf{s},\mathbf{r})}\{C_{(\tau,\mathbf{s},\mathbf{r})}\}
		\end{equation}
		where $\tau=1,\cdots,T$, and
		\begin{equation}\label{Eq_3.42}
		C_{(0,\mathbf{s},\mathbf{r})}=0,~\text{for all}~(\mathbf{s},\mathbf{r}),
		\end{equation}
		then the optimal objective function value of the LP problem $(\ref{Eq_3.21})$ is equivalent to that of the $0$-$1$ IP problem $(\ref{Eq_2.8})$, and $\delta$ is the optimal solution of both problems $(\ref{Eq_2.8})$ and $(\ref{Eq_3.21})$.
	%    Moreover, let $I^{N+1}=\{(j^{s_j}, i_1^j, \cdots, i_N^j); j=1,\cdots,M_0\}$,
	%    then the association probability $(\ref{Eq_3.22})$ derived from the optimal solution of problem $(\ref{Eq_3.21})$ is in the form of $\{0,1\}$, i.e.,
	%    \begin{equation}
	%    \nonumber P_{i_0,i_1}=\left\{
	%    \begin{array}{rcl}
	%    1       &      & {(i_0, i_1, \cdots, i_N)\in I^{N+1}},\\
	%    0     &      & {others},
	%    \end{array}
	%    \right.
	%    \end{equation}
	%    where $i_0=1,\cdots,M_0$.
	\end{prop}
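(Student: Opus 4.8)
The plan is to exploit that the LP (\ref{Eq_3.21}) is a relaxation of the $0$-$1$ IP (\ref{Eq_2.8}): every feasible point of (\ref{Eq_2.8}) is feasible for (\ref{Eq_3.21}), and since both are minimizations over the same linear objective, the optimal value of the LP is no larger than that of the IP. It therefore suffices to produce a lower bound on the LP objective that the given $\delta$ attains; the two optima are then squeezed together. The whole argument is a decoupling-across-targets estimate, so the essential idea is to show that, under the cost conditions (\ref{Eq_3.41}) and (\ref{Eq_3.42}), the cheapest the LP can possibly do is to put, for each real target, all of its unit of probability on a minimum-cost local hypothesis.

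First I would establish a per-target lower bound using \emph{only} the normalization constraints (\ref{Eq_3.18}) and the nonnegativity in (\ref{Eq_3.182}), deliberately discarding the measurement constraints (\ref{Eq_3.181}). Write $C_\tau^{\min}:=\min_{(\mathbf{s},\mathbf{r})}C_{(\tau,\mathbf{s},\mathbf{r})}$. For any LP-feasible $P$ and any fixed $\tau\in\{1,\dots,T\}$, nonnegativity of $P_{(\tau,\mathbf{s},\mathbf{r})}$ and $C_{(\tau,\mathbf{s},\mathbf{r})}\ge C_\tau^{\min}$ give $\sum_{\mathbf{s}}\sum_{\mathbf{r}}C_{(\tau,\mathbf{s},\mathbf{r})}P_{(\tau,\mathbf{s},\mathbf{r})}\ge C_\tau^{\min}\sum_{\mathbf{s}}\sum_{\mathbf{r}}P_{(\tau,\mathbf{s},\mathbf{r})}=C_\tau^{\min}$, where the last equality is exactly (\ref{Eq_3.18}). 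The dummy-target contribution $\tau=0$ is annihilated by the hypothesis (\ref{Eq_3.42}) that $C_{(0,\mathbf{s},\mathbf{r})}=0$. Summing over $\tau=1,\dots,T$ yields the global bound $\sum_{\tau}\sum_{\mathbf{s}}\sum_{\mathbf{r}}C_{(\tau,\mathbf{s},\mathbf{r})}P_{(\tau,\mathbf{s},\mathbf{r})}\ge\sum_{\tau=1}^{T}C_\tau^{\min}$, valid for every LP-feasible $P$.

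Next I would evaluate the objective at $\delta$. Because $\delta$ selects $(\tau,\mathbf{s}^\tau,\mathbf{r}^\tau)$ for each real target and all dummy costs vanish by (\ref{Eq_3.42}), its objective value collapses to $\sum_{\tau=1}^{T}C_{(\tau,\mathbf{s}^\tau,\mathbf{r}^\tau)}$, which by (\ref{Eq_3.41}) equals $\sum_{\tau=1}^{T}C_\tau^{\min}$, i.e. precisely the lower bound just derived. Since $\delta$ is feasible for the IP, it is a fortiori feasible for the LP, so it attains the bound and is LP-optimal with value $\sum_{\tau=1}^{T}C_\tau^{\min}$. Finally, the relaxation inequality together with the IP-feasibility of $\delta$ gives $\text{(IP optimum)}\ge\text{(LP optimum)}=\sum_{\tau=1}^{T}C_\tau^{\min}\ge\text{(IP optimum)}$, forcing equality throughout; hence $\delta$ is optimal for (\ref{Eq_2.8}) as well and the two optimal values coincide.

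I do not expect a serious analytic obstacle, since the estimate is a one-line averaging bound; the only point that genuinely demands care is the \emph{scope} of the constraints invoked. The lower bound must use exclusively the target-normalization equalities (\ref{Eq_3.18}) and nonnegativity, and never the measurement-normalization equalities (\ref{Eq_3.181}) — this is what makes the bound decouple across targets and hold over a region at least as large as the LP feasible set. Crucially, the simultaneous feasibility of the cheapest assignment with respect to (\ref{Eq_3.181}) is not something to be established: it is part of the standing hypothesis that such a feasible $\delta$ exists, so the proof reduces entirely to the squeezing argument above.
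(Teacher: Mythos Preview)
Your argument is correct and complete. The squeezing at the end is sound: since $\delta$ is IP-feasible with value $\sum_{\tau=1}^T C_\tau^{\min}$, the IP optimum is at most this value, while the relaxation inequality and your lower bound give the reverse.

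Your route is genuinely different from the paper's. The paper argues via \emph{feasible directions}: it fixes $\delta$, takes an arbitrary feasible direction $d$, and uses the target-normalization constraints to deduce $\sum_{(\mathbf{s},\mathbf{r})}d_{(\tau,\mathbf{s},\mathbf{r})}=0$ for each $\tau\neq 0$; combining this with (\ref{Eq_3.41}) and (\ref{Eq_3.42}) yields $C'd\ge 0$, so $\delta$ is locally optimal for the LP, and then the paper invokes the LP fact that local optimality implies global optimality. You instead derive a direct lower bound $\sum_{\tau=1}^T C_\tau^{\min}$ on the LP objective by a one-line per-target averaging estimate and observe that $\delta$ attains it. Both proofs use exactly the same structural inputs---the constraints (\ref{Eq_3.18}), nonnegativity, and the cost hypotheses (\ref{Eq_3.41})--(\ref{Eq_3.42})---and both deliberately ignore the measurement constraints (\ref{Eq_3.181}). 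Your approach is more elementary in that it never appeals to the local-equals-global property of linear programs; the paper's approach is closer to a verification of first-order optimality conditions. Functionally they are equivalent, but yours is shorter and self-contained.
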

	\begin{proof}
		See Appendix C.
	\end{proof}

	Notice that the component $C_{(0,\mathbf{s},\mathbf{r})}$ is the cost coefficient corresponding to a born track hypothesis $(0,\mathbf{s},\mathbf{r})=(0,s_1,\cdots,s_N,r_1,\cdots,r_N)$. It makes sense to set the cost coefficient $C_{(0,\mathbf{s},\mathbf{r})}$ equal to $0$ when we do not allow any born targets during data association of the confirmed targets. In practical applications, we use the unassigned measurements to initialize new tracks \cite{Kuiren1990Issues}.

	\begin{rem}
		It is not required to check the condition $(\ref{Eq_3.41})$ in Algorithm $\ref{alg_2}$.
		%since it has no effect on the process of Algorithm $\ref{alg_2}$.
		When the condition $(\ref{Eq_3.41})$ is not satisfied, the optimal solution of the LP problem $(\ref{Eq_3.21})$ is a probabilistic solution and Algorithm $\ref{alg_2}$ updates the stacked state vector by the RCMKF.
		When the condition $(\ref{Eq_3.41})$ holds, the optimal solution of the LP problem $(\ref{Eq_3.21})$ is a $0$-$1$ assignment, and Algorithm $\ref{alg_2}$ uses the KF to update the stacked state vector with a lower computational complexity as stated in the following Corollary. 	
	\end{rem}
	
	\begin{cor}\label{coro_1}
		If the conditions of Proposition \ref{pro_2} hold, the computational complexity of RCMKF reduces to that of KF, i.e., $O(R_1\cdot m^3)$, where $R_1$ is the number of valid measurements, and $m$ is the dimension of the measurement.
		Moreover, if there are $\hat{R}_1$ measurements are shared with different targets, then the computational complexity of the RCMKF is $O((R_1 - \hat{R}_1)\cdot m^3+ (\hat{R}_1 \cdot m)^3$.
	\end{cor}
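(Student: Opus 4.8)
The plan is to bound the per-step cost by tracking how the hypotheses of Proposition \ref{pro_2} simplify the recursion (\ref{Eq_3.33})--(\ref{Eq_3.40}) and, in particular, how they force a block structure on the innovation covariance $\bar{\mathbf{h}}\mathbf{P}_{k+1|k}\bar{\mathbf{h}}'+\widetilde{\mathbf{R}}_{k+1}$, whose factorization in (\ref{Eq_3.39}) is the dominant operation (the preceding Remark already pins the worst case at $O((R_1 m)^3)$).

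First I would show that under Proposition \ref{pro_2} the RCMKF collapses to the ordinary Kalman filter. Since the LP optimum is then $0$-$1$, every association probability obeys $P_{\tau,r_1}^h\in\{0,1\}$ and $P_{\tau,s_1}^f\in\{0,1\}$, so the coefficient matrices $\mathbf{F}_k$ and $\mathbf{h}_{k+1}$ are deterministic and $\widetilde{\mathbf{F}}_k=\mathbf{0}$, $\tilde{\mathbf{h}}_{k+1}=\mathbf{0}$. Substituting these into (\ref{Eq_3.35}) and (\ref{Eq_3.38}) makes the random-correction terms vanish, leaving $\widetilde{\mathbf{Q}}_k=\mathbf{Q}_k$ and $\widetilde{\mathbf{R}}_{k+1}=\mathbf{R}_{k+1}$, both block diagonal by the independence assumptions; thus (\ref{Eq_3.33})--(\ref{Eq_3.40}) reduce exactly to the standard KF recursion.

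The key step is the block-diagonality of the innovation covariance. Labeling the measurements $1,\cdots,R_1$, determinism forces each measurement $i$ to attach to a unique target $\tau_i$, so the $i$-th row block of $\bar{\mathbf{h}}$ carries $H_{k+1}$ only in column block $\tau_i$; hence for $i\neq j$ the $(i,j)$ block of the innovation covariance is $H_{k+1}\,\mathbf{P}_{k+1|k}^{(\tau_i,\tau_j)}\,H_{k+1}'$ (the independent measurement noise contributes nothing off the diagonal). I would then argue by induction that $\mathbf{P}_{k|k}$ stays block diagonal: the prediction (\ref{Eq_3.36}) preserves it because $\overline{\mathbf{F}}=\text{diag}(\mathbf{F}_k^1,\cdots,\mathbf{F}_k^T)$ and $\widetilde{\mathbf{Q}}_k=\mathbf{Q}_k$ are block diagonal, and the update (\ref{Eq_3.40}) preserves it because $\bar{\mathbf{h}}'S^{-1}\bar{\mathbf{h}}$ is block diagonal whenever $S$ is. Since distinct measurements attach to distinct targets in a deterministic assignment, $\mathbf{P}_{k+1|k}^{(\tau_i,\tau_j)}=\mathbf{0}$ for $i\neq j$, so the innovation covariance splits into $R_1$ independent $m\times m$ blocks; inverting them costs $R_1$ factorizations of size $m$, i.e.\ $O(R_1 m^3)$.

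For the second claim I would partition the measurements into the $R_1-\hat{R}_1$ attached to a single target and the $\hat{R}_1$ that are shared. The unshared measurements are associated with mutually uncorrelated targets, so by the same reasoning they yield decoupled $m\times m$ diagonal blocks of total cost $O((R_1-\hat{R}_1)m^3)$, while the shared measurements, together with the targets they couple (whose cross-covariance blocks are now nonzero because the shared measurements correlate those targets), form one dense $(\hat{R}_1 m)\times(\hat{R}_1 m)$ block costing $O((\hat{R}_1 m)^3)$ to invert; summing gives the stated bound. The hard part will be the block-diagonality argument itself: one must verify that the inductive propagation of block structure through (\ref{Eq_3.36}) and (\ref{Eq_3.40}) is never destroyed, and in the mixed case that the shared and unshared sub-blocks of the innovation covariance stay decoupled, which hinges on the unshared targets having never taken part in measurement sharing so that their off-diagonal covariance blocks remain zero.
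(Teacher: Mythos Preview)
Your proposal is essentially correct and follows the same core idea as the paper: identify the inversion in (\ref{Eq_3.39}) as the bottleneck and establish that the innovation covariance $\bar{\mathbf{h}}\mathbf{P}_{k+1|k}\bar{\mathbf{h}}'+\widetilde{\mathbf{R}}_{k+1}$ inherits a block structure that makes the inversion cheap. The organization differs slightly. The paper treats the first claim as the special case $\hat{R}_1=0$ of the second, and rather than running your induction it simply \emph{assumes} without loss of generality that the targets are uncoupled at time $k$ (block-diagonal $\mathbf{P}_{k|k}$), then checks directly that $\mathbf{P}_{k+1|k}$, $\mathbf{R}_{k+1}$, and $E(\tilde{\mathbf{h}}_{k+1}E(\mathbf{X}_{k+1}\mathbf{X}_{k+1}')\tilde{\mathbf{h}}_{k+1}')$ are all block diagonal, and that $\bar{\mathbf{h}}_{k+1}$ has the form $\mathrm{diag}(H,\ldots,H,\bar{\mathbf{h}})$ with a dense trailing block $\bar{\mathbf{h}}$ for the shared measurements. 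One small point to correct in your write-up: in the paper's single-step analysis the dense $(\hat{R}_1 m)\times(\hat{R}_1 m)$ block of the innovation covariance arises from the structure of $\bar{\mathbf{h}}_{k+1}$ (a shared measurement's row block has nonzero entries in several target columns, since $P_{\tau,r_1}^h$ is spread over multiple $\tau$), not from nonzero cross-covariance blocks in $\mathbf{P}_{k+1|k}$, which remains block diagonal under the uncoupled-at-time-$k$ assumption. Your inductive version is more careful about propagation across steps, but for the stated per-step complexity the paper's WLOG shortcut suffices.
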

	% \begin{cor}\label{coro_1}
	% 	If the conditions of Proposition \ref{pro_2} hold, the computational complexity of RCMKF reduces to that of KF, i.e., $O(T\cdot m^3)$, where $T$ is the number of the targets, and $m$ is the dimension of the measurement.
	% 	Moreover, if the number of targets that share measurements is $\hat{T}$, then the computational complexity of the RCMKF is $O((T - \hat{T})\cdot m^3+ (\sum_{\tau=1}^{\hat{T}}R^{\tau})^3 \cdot m^3)$, where $R^{\tau}$ is the number of the valid measurements of the target $\tau$.
	% \end{cor}
	\begin{proof}
		See Appendix D.
	\end{proof}
	
	\begin{rem}
		In numerical experiments, we found that the condition $(\ref{Eq_3.41})$ is frequently satisfied for targets with large spacing.
		Therefore, in practice, the computational complexity of the RCMKF in Algorithm $\ref{alg_2}$ usually does not reach the worst-case computational complexity of $O((R_1\cdot m)^3)$.
	\end{rem}
	
	Based on linear programing and the RCMKF, the RMM-MHT algorithm is summarized in Algorithm 1.
	\begin{algorithm}
		\caption{Randomized Multiple Model MHT}
		\label{alg_2}
				\begin{itemize}
     \item Require: $\{\mathbf{x}_k^{\tau}\}_{\tau=1}^{T}$, $\{\mathbf{Z}_{k+1},\cdots,\mathbf{Z}_{k+N}\}$;
     \item Ensure:	$\{\mathbf{x}_{k+1}^{\tau}\}_{\tau=1}^{T}$

    \begin{enumerate}
      \item establish the relaxed problem $(\ref{Eq_3.21})$;
      \item solve problem $(\ref{Eq_3.21})$ by the linear programming algorithm;
      \item obtain the probabilities of data associations $(\ref{Eq_3.22})$ and dynamic models $(\ref{Eq_3.27})$, respectively, and establish the stacked system $(\ref{Eq_3.31})$ and $(\ref{Eq_3.32})$;
      \item update the stacked state system $(\ref{Eq_3.31})$ and $(\ref{Eq_3.32})$ by RCMKF $(\ref{Eq_3.33})$-$(\ref{Eq_3.40})$ and obtain the state estimates $\{\mathbf{x}_{k+1}^{\tau}\}_{\tau=1}^{T}$.
      \item set $k = k+1$;
    \end{enumerate}
   \end{itemize}

	\end{algorithm}
	
	\subsection{Track management}
	Track initialization: There are a number of methods to initialize a new track, such as the single-point method \cite{Mallick2008Comparison}, and the two-point method \cite{bar2004estimation}.
	The key to track initialization is how to select the points to form a track. In the framework of RMM-MHT, we consider the initialization problem as follows:
	\begin{align}
	\nonumber \text{Minimize}~
	&\sum_{\mathbf{r}}C_{\mathbf{r}} P_{\mathbf{r}}\\
	\nonumber
	\text{Subject to:}\\
	\nonumber
	&\sum_{\mathbf{r}\setminus\{r_n\}}P_{\mathbf{r}}=1,\\
	\nonumber
	&\text{for}~n=1,\cdots,N~\text{and}~r_n=1,\cdots,R_n,\\
	& ~0\leq P_{\mathbf{r}} \leq 1~\text{for all}~\mathbf{r},
	\end{align}
	where $ \mathbf{r}=(r_1,\cdots,r_N) $ is the hypothesis of a new track, $ P_{\mathbf{r}} $ is the probability that $ \mathbf{r} $ forms a new track, and $ C_{\mathbf{r}}=-L_{\mathbf{r}} $ is the  cost coefficient, which is defined in Appendix A.
	When the probability of a local track hypothesis is $ 1 $ or greater than a threshold, we apply a track initialization method to initialize a track.
		
	Track termination: For a specific track $ \tau $, we consider it to lose a measurement in the current scan if its association probability $ P_{\tau,0}^h=1 $. Moreover, the target $ \tau $ is considered to be terminated if it loses measurements over more than $ N_l $ scans, where $ N_l $ is the threshold for track termination \cite{Mallick2012MHT}.
	
	\section{Simulation results}\label{sec_4}
	In this section, we consider a typical MMTT scenario where we include both closely spaced targets and maneuvering targets described in \cite{bar2004estimation}.
	The true trajectories of the three targets are shown in Fig. \ref{fig_5}.
	As shown in Fig. \ref{fig_5}, the three targets fly westward with their motion described by the constant velocity (CV) model, before executing a $ 1^\circ/s $ coordinated turn based on the constant turn (CT) model. Then the three targets fly southward based on the CV model, followed by a $ 3^\circ/s $ coordinated turn based on the CT model. Finally, they continue to fly westward based on the CV model after the turn.
	Here, the state transition matrices of the CV and CT model are
	\begin{equation}\nonumber
		F_{CV}=\left[
		\begin{array}{cccc}
			1 & \Delta T & 0 & 0\\
			0 & 1 & 0 & 0\\
			0 & 0 & 1 & \Delta T\\
			0 & 0 & 0 & 1
		\end{array}
		\right],
	\end{equation}
	and
	\begin{align}
	\nonumber F_{CT}=\left[
	\begin{array}{cccc}
	1 & \frac{\sin\omega \Delta T}{\omega} & 0 & -\frac{1-\cos\omega \Delta T}{\omega}\\
	0 & \cos\omega \Delta T & 0 & -\sin\omega \Delta T\\
	0 & \frac{1-\cos\omega \Delta T}{\omega} & 1 & \frac{\sin\omega \Delta T}{\omega}\\
	0 & \sin\omega \Delta T & 0 & \cos\omega \Delta T
	\end{array}
	\right],
	\end{align}
	respectively, where $ \omega $ is the angle velocity.
	The corresponding process noise is
	\begin{equation}\label{cov}
		Q=q*\left[
		\begin{array}{cccc}
		\Delta T^3 & \Delta T^2/2 & 0 & 0\\
		\Delta T^2/2 & \Delta T & 0 & 0\\
		0 & 0 & \Delta T^3/3 & \Delta T^2/2\\
		0 & 0 & \Delta T^2/2 & \Delta T
		\end{array}
		\right].
	\end{equation}
	In this scenario, there are two coordinated turns, i.e., $ \omega^{(1)}=1^\circ/s $ and $ \omega^{(2)}=-3^\circ/s $.
	
	\begin{figure}[ht]
		\centering
		\includegraphics[width=\hsize]{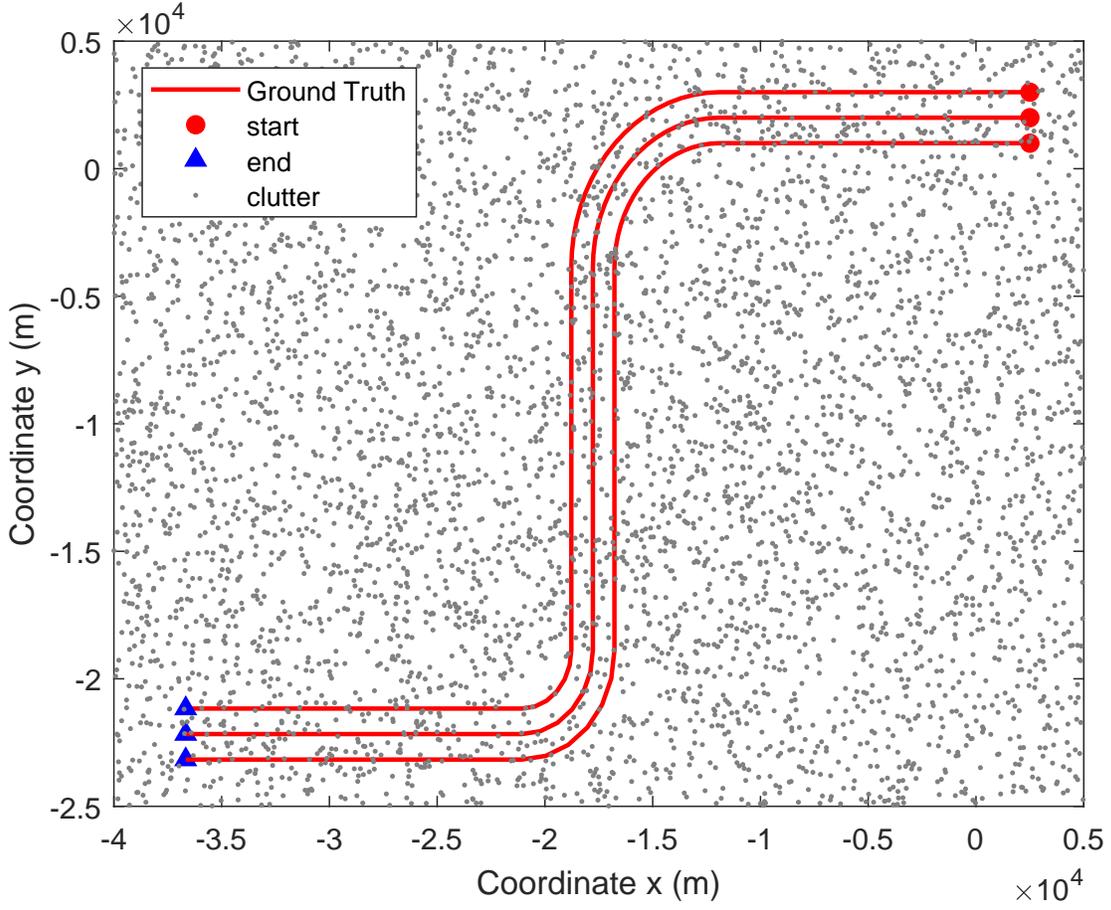}
		\caption{Scenario }%n????????
		\label{fig_5}
	\end{figure}
	
	Before we perform the comparison, a set of models is required for maneuvering target tracking.
	In general, the specific parameters of the turning speed are not known, but it is reasonable to use a large variance in the CV model to represent target maneuvers. Therefore, a set of CV models with two different variances is chosen for simulation of the above algorithms. Specifically, the parameters of the process noises of the two CV models are set to $q^{(1)}=0.01\text{m/s}^2$ and $q^{(2)}=4\text{m/s}^2$ in (\ref{cov}), respectively. The sampling period is $\Delta T=5$s.
	
	The simulation parameters are set as follows:
	the measurement model is the target position plus Gaussian noise of zero mean and standard deviation $ 400 \text{m}$.
	The clutter model assumes that the clutter returns originate from points that are uniformly distributed over the surveillance region,
	while the number of clutter returns is assumed to follow the Poisson distribution with a known mean $\lambda_{f}=50$ during each scan (see Fig. \ref{fig_5}).
	The target space is set to $1000$m.
	For each track, the two-point method \cite{bar2004estimation} is applied for track initialization and the tracking gate is performed with a threshold such that the probability of excluding the target-derived measurement is $10^{-4}$ \cite{Meyer2018}.

	\begin{align}
		\nonumber
		P^{(1)}=\left[
		\begin{array}{cc}
		0.95 & 0.05 \\
		0.1 & 0.9
		\end{array}
		\right],\\
		\nonumber
		P^{(2)}=\left[
		\begin{array}{cc}
		0.95 & 0.05 \\
		0.2 & 0.8
		\end{array}
		\right],\\
		\nonumber
		P^{(3)}=\left[
		\begin{array}{cc}
		0.95 & 0.05 \\
		0.3 & 0.7
		\end{array}
		\right].
	\end{align}	
	The corresponding IMM and IMM-MHT are denoted by IMM-P1, IMM-P2 and IMM-P3, IMM-MHT-P1, IMM-MHT-P2 and IMM-MHT-P3, respectively.
	The results of the simulation for RMM-MHT and IMM-MHT are compared using the tracking performance index: the average of the OSPA distance $ d_p^c(X,Y) $ \cite{SchuhmacherA} of $100$ Monte Carlo simulation runs.
	In Fig. \ref{fig_6}, the OSPA distances of the IMM filter based on different mode transition probability matrices are plotted as a function of the time steps, where we assume that the data association is correct. In Fig. \ref{fig_7}, the OSPA distances of the IMM-MHT and RMM-MHT, where we need to deal with the data association issue, are plotted as a function of the time steps, respectively.

	From Figs. \ref{fig_6}-\ref{fig_7}, we have the following observations:
	\begin{enumerate}
		\item Fig. \ref{fig_6} shows that, for this scenario, if there are no data association problems, the overall performance of the IMM filter is not sensitive to the mode transition probability matrix, except during the maneuvering stage, which is consistent with \cite{bar2004estimation,li2005survey}.
		\item Fig. \ref{fig_7} shows that when there are data association problems, the overall performance of IMM-MHT is sensitive to the mode transition probability matrix. Since if the IMM filter has a worse performance during the maneuvering stage like in Fig. \ref{fig_6}, it would lead to an incorrect data association so that the overall performance becomes worse.
		\item Fig. \ref{fig_7} also shows that the overall performance of RMM-MHT is close to the best performing IMM variant namely IMM-MHT-P1. Note that RMM-MHT does not require prior mode transition probabilities. Thus, it is more robust for tracking multiple maneuvering targets. The reason is that the probabilities of the target follow a specific dynamic model and the data association are derived by jointly optimizing the multiple possible models and data association hypotheses, whereas, IMM-MHT deals with them separately.
	\end{enumerate}

	\begin{figure}[ht]
		\centering
		\includegraphics[width=\hsize]{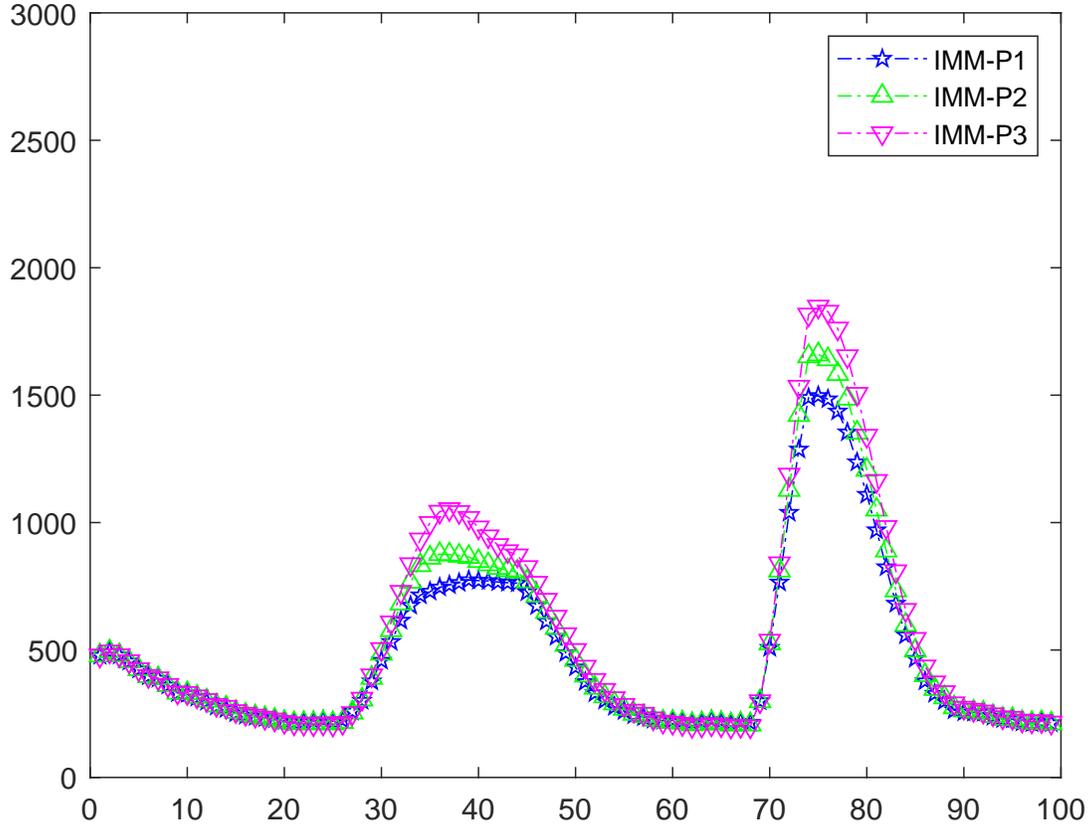}
		\caption{OSPA distances for different mode transition probability matrices}%n????????
		\label{fig_6}
	\end{figure}
	\begin{figure}[ht]
		\centering
		\includegraphics[width=\hsize]{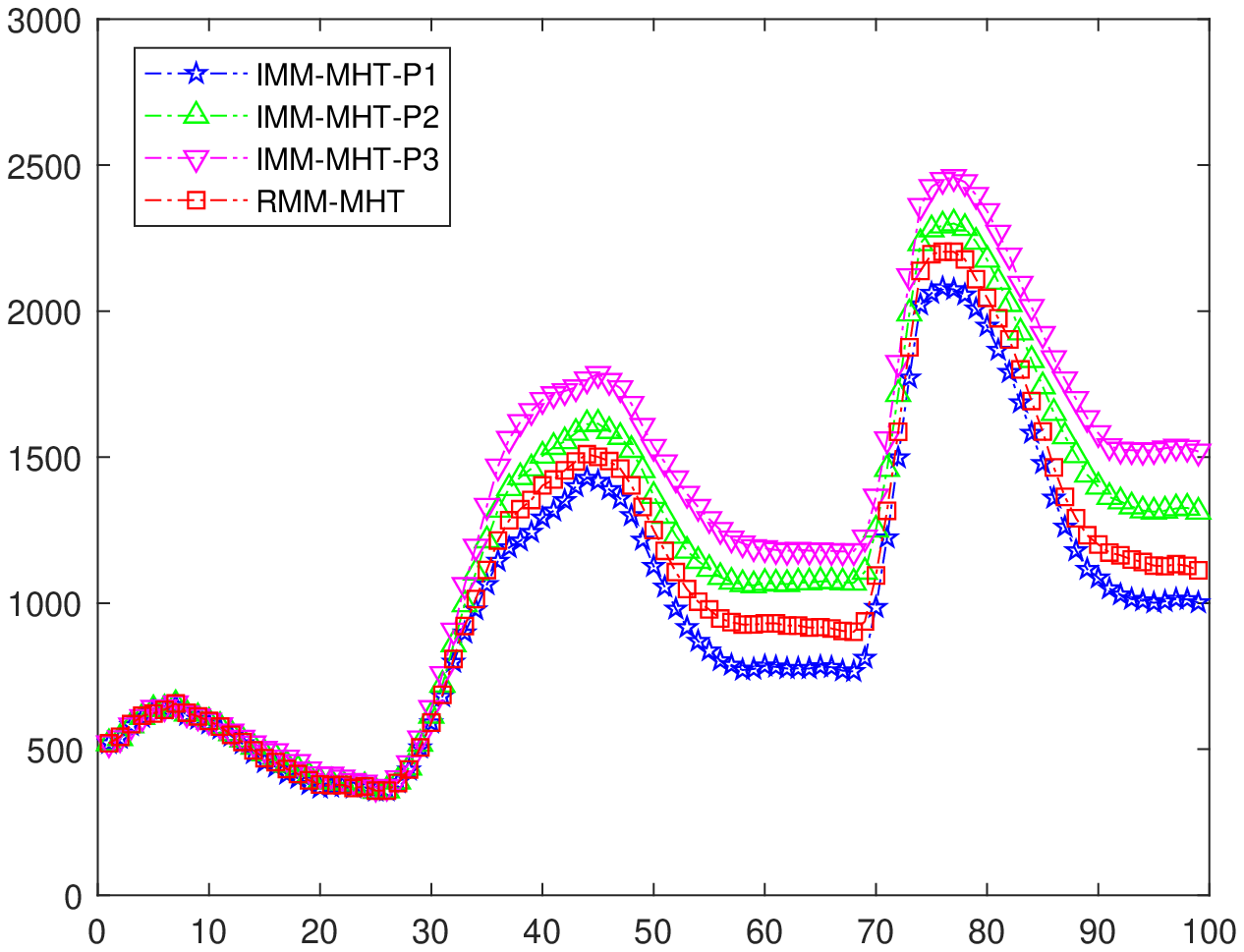}
		\caption{OSPA distances for IMM-MHT and RMM-MHT}%n????????
		\label{fig_7}
	\end{figure}
	
	\section{Conclusion}\label{sec_5}
	This paper proposed a novel RMM-MHT method for multi-target tracking. The main purpose of the proposed algorithm was to improve the tracking performance while enhancing robustness of MMTT with respect to the mode transition probabilities.
	In the proposed RMM-MHT algorithm, randomized data association avoids the NP-hard problem inherent in MHT, which reduces the probability of wrong associations.
	Since randomized data association introduces randomness in both the state and measurement equations, we generalized the RCMKF to improve the performance of state estimation.
	Moreover, RMM-MHT is suitable for tracking multiple maneuvering targets.
	It obtains the probability of the dynamic model of targets by solving the LP problem, which only requires a given set of models and does not assume a prior mode transition probability. Simulation results show that the RMM-MHT and IMM-MHT have a comparable performance and the RMM-MHT is robust since it does not require a prior knowledge of state transition probabilities.
	The future works include extending RMM-MHT to the case of nonlinear dynamic systems and multiple sensor networks.
	
\section*{Acknowledgements}
The authors would like to thank Yunmin Zhu, Yi Zhang, Yiwei Liao, Hanning Tang for helpful suggestions.

\appendix
\section{The calculation of Likelihood}
For each local hypothesis $ (\tau,\mathbf{s},\mathbf{r}) $, we reformulate the cost coefficient as follows:
\begin{align}\nonumber
C_{(\tau,\mathbf{s},\mathbf{r})}=-\log L_{(\tau,\mathbf{s},\mathbf{r})},
\end{align}
where $ L_{(\tau,\mathbf{s},\mathbf{r})} $ is the likelihood of the local hypothesis $ (\tau,\mathbf{s},\mathbf{r}) $.
Specifically, we define the likelihood $ L_{(\tau,\mathbf{s},\mathbf{r})} $ as
\begin{align}\label{appen_A}
\nonumber
L_{(\tau,\mathbf{s},\mathbf{r})}=\prod_{n=1}^{N}&(P_{\phi}^n)^{\Delta_{r_n}}
\cdot \left\{\left[\frac{P_d p_t(\mathbf{z}_n^{r_n}|\mathbf{x}^{\tau}, \mathbf{s}, \mathbf{r})}
{\lambda_f p_f(\mathbf{z}_n^{r_n})}
\right]^{\delta_{r_n}^n}\right.  \\
&\left. \cdot \left[\frac{\lambda_v  p_v(\mathbf{z}_n^{r_n})}
{\lambda_f p_f(\mathbf{z}_n^{r_n})}
\right]^{v_{r_n}^n}
\right\}^{1-\Delta_{r_n}},
\end{align}
where
\begin{align}
\nonumber
&\Delta_{r_n} = \left\{
\begin{array}{rcl}
1& & r_n = 0, \\
0& & \text{otherwise},
\end{array}\right.\\
\nonumber
&P^n_{\phi} = \left\{
\begin{array}{rcl}
&1-P_d  & \mathbf{z}_n^{r_n}~\text{is a detection}, \\
&1 &  \text{otherwise},
\end{array}\right.\\
\nonumber
&\delta_{r_n}^n = \left\{
\begin{array}{rcl}
&1  & \mathbf{z}_n^{r_n}~\text{belongs to an existing track}, \\
&0 &  \text{otherwise},
\end{array}\right.\\
\nonumber
&v_{r_n}^n = \left\{
\begin{array}{rcl}
&1  & \mathbf{z}_n^{r_n}~\text{initiates a track}, \\
&0 &  \text{otherwise},
\end{array}\right.
\end{align}
$ P_d $ is the probability of detection, $ \lambda_f $ is the expected number of clutter returns, $ \lambda_v $ is the expected number of born tracks. The corresponding densities in (\ref{appen_A}) are
\begin{align}
\nonumber
&p_f = p_v = 1/V,\\
\nonumber
&p_t(\mathbf{z}_n^{r_n}|\mathbf{x}^{\tau}, \mathbf{s}, \mathbf{r})=\mathcal{N}(\mathbf{z}_n^{r_n};\hat{\mathbf{z}}_n,\mathbf{S}_n),
\end{align}
where $ V $ is the volume of the surveillance region,
$ \hat{\mathbf{z}}_n $ and $ \mathbf{S}_n $ are the pseudo measurement and the corresponding covariance of $ \mathbf{x}^{\tau} $ at the $ n $-th scan, i.e.,
\begin{align}
\nonumber
&\hat{\mathbf{z}}_n = HF^{(s_n)}\cdots F^{(s_2)} F^{(s_1)}\mathbf{x}^{\tau},\\
\nonumber
&\mathbf{S}_n = H\{F^{(s_n)}\cdots  [F^{(s_1)}\mathbf{P}(F^{(s_1)})'+\mathbf{Q}^{(s_1)}]\\
\nonumber
&\qquad\qquad\qquad\times \cdots (F^{(s_n)})' + \mathbf{Q}^{(s_n)}\}H' + \mathbf{R}.
\end{align}

\section{Proof of Proposition 1}

We just need to prove that the noise processes $\{\tilde{\mathbf{v}}_k\}$ and $\{\tilde{\mathbf{w}}_k\}$ in system $(\ref{Eq_2.15})$ and $(\ref{Eq_2.16})$ satisfy the conditions of the standard Kalman Filter, i.e.,
\begin{align}
 \nonumber & E(\tilde{\mathbf{v}}_k)=0,\qquad E(\tilde{\mathbf{v}}_k\tilde{\mathbf{v}}'_l) =\widetilde{\mathbf{Q}}_k\delta_{k,l},\\
 \nonumber & E(\tilde{\mathbf{w}}_k)=0,\qquad E(\tilde{\mathbf{w}}_k\tilde{\mathbf{w}}'_l) =\widetilde{\mathbf{R}}_k\delta_{k,l},\\
 \nonumber & E(\tilde{\mathbf{v}}_k\tilde{\mathbf{w}}'_l)=0,\\
 \nonumber & E(\mathbf{x}_0\tilde{\mathbf{v}}'_k)=0,\qquad E(\mathbf{x}_0\tilde{\mathbf{w}}'_k)=0,
\end{align}
where
\begin{align}
   \nonumber
   \widetilde{\mathbf{Q}}_k &= \sum_{s=1}^S P_s^f (\mathbf{Q}^{(s)}_k+(F^{(s)}_k-\overline{\mathbf{F}}) E(\mathbf{x}_k\mathbf{x}'_k) (F^{(s)}_k-\overline{\mathbf{F}})'),\\
   \nonumber
   \widetilde{\mathbf{R}}_{k} &= \sum_{m=1}^M P_m^h (\mathbf{R}^{(m)}_{k}+(H^{(m)}_{k}-\overline{\mathbf{H}}) E(\mathbf{x}_{k}\mathbf{x}'_{k}) (H^{(m)}_{k}-\overline{\mathbf{H}})').
\end{align}

From the definition of $\tilde{\mathbf{v}}_k$ and condition 1), we have
\begin{align}
  \nonumber E(\tilde{\mathbf{v}}_k) =E(\mathbf{v}_k)+E(\widetilde{\mathbf{F}}_k)E(\mathbf{x}_k)=0.
\end{align}
By condition 1), $\mathbf{x}_0$ is independent of $\tilde{\mathbf{v}}_k$, thus
\begin{align}
  \nonumber E(\mathbf{x}_0\tilde{\mathbf{v}}'_k)=E(\mathbf{x}_0)E(\tilde{\mathbf{v}}'_k)=0
\end{align}

Without loss of generality, we assume that $k>l$, and from the definition of $\tilde{\mathbf{v}}_k$, we deduce that
\begin{align}
  \nonumber E(\tilde{\mathbf{v}}_k\tilde{\mathbf{v}}'_l)
  =E(\mathbf{v}_k\mathbf{v}'_l + \mathbf{v}_k\mathbf{x}'_l\widetilde{\mathbf{F}}'_l + \widetilde{\mathbf{F}}_k\mathbf{x}_k\mathbf{v}'_l + \widetilde{\mathbf{F}}_k\mathbf{x}_k\mathbf{x}'_l\widetilde{\mathbf{F}}'_l).
\end{align}
By conditions 1) and 2), we have $E(\mathbf{v}_k\mathbf{v}'_l)=0$, $E(\mathbf{v}_k\mathbf{x}'_l\widetilde{\mathbf{F}}'_l)=0$, $E(\widetilde{\mathbf{F}}_k\mathbf{x}_k\mathbf{v}'_l)=0$, and  $E(\widetilde{\mathbf{F}}_k\mathbf{x}_k\mathbf{x}'_l\widetilde{\mathbf{F}}'_l)=0$. Thus, $E(\tilde{\mathbf{v}}_k\tilde{\mathbf{v}}'_l)=0$ for $k\neq l$. Moreover, when $k=l$,
\begin{align}
\nonumber  &E(\tilde{\mathbf{v}}_k\tilde{\mathbf{v}}_k) \\
\nonumber
&= E(\mathbf{v}_k\mathbf{v}'_k + \mathbf{v}_k\mathbf{x}'_k\widetilde{\mathbf{F}}'_k + \widetilde{\mathbf{F}}_k\mathbf{x}_k\mathbf{v}'_k + \widetilde{\mathbf{F}}_k\mathbf{x}_k\mathbf{x}'_l\widetilde{\mathbf{F}}'_k)\\
\nonumber
&= \sum_{s=1}^{S}P_s^f \{E(v^{(s)}_k(v^{(s)}_k)')+E(v^{(s)}_k)\mathbf{x}'_k(F^{(s)}_k-\overline{\mathbf{F}})'\\
\nonumber
&\qquad +(F^{(s)}_k-\overline{\mathbf{F}})\mathbf{x}_kE(v^{(s)}_k)'\\
\nonumber
&\qquad +(F^{(s)}_k-\overline{\mathbf{F}})\mathbf{x}_k\mathbf{x}'_k(F^{(s)}_k-\overline{\mathbf{F}})'\}\\
\nonumber
&= \sum_{s=1}^{S}P_s^f \{\mathbf{Q}^{(s)}_k + (F^{(s)}_k-\overline{\mathbf{F}})\mathbf{x}_k\mathbf{x}'_k(F^{(s)}_k-\overline{\mathbf{F}})'\}.
\end{align}
That is, $E(\tilde{\mathbf{v}}_k\tilde{\mathbf{v}}'_l) =\widetilde{\mathbf{Q}}_k\delta_{k,l}$. We can prove the result for $\widetilde{\mathbf{w}}$ in a similar way.

\section{Proof of Proposition 2}    % Each appendix must have a short title.
Since the LP problem $(\ref{Eq_3.21})$ is a relaxed version of the $0$-$1$ IP problem $(\ref{Eq_2.8})$, i.e., the optimal objective function value of problem $(\ref{Eq_3.21})$ is a lower bound on that of the problem $(\ref{Eq_2.8})$. From the assumption that $\delta$ is a feasible solution of the two problems, if we prove that $\delta$ is an optimal solution of the problem $(\ref{Eq_3.21})$, then $\delta$ is also an optimal solution of the problem $(\ref{Eq_2.8})$.
Thus, to prove the equivalence of the optimal values of the problems $(\ref{Eq_2.8})$ and $(\ref{Eq_3.21})$, we only need to prove that $\delta$ is an optimal solution of the problem $(\ref{Eq_3.21})$.

We start by proving that $\delta$ is a locally optimal solution of the problem $(\ref{Eq_3.21})$.
We say that $d$ is a feasible direction, if there is a positive scalar $\theta$, $\hat{\delta}=\delta+\theta d$ is a feasible solution of the problem $(\ref{Eq_3.21})$. For any feasible direction $d$ with some $d_{(\tau,\mathbf{s},\mathbf{r})}>0$, where $1\leq \tau \leq T$ and $(\tau,\mathbf{s},\mathbf{r})\neq (\tau,\mathbf{s}^{\tau},\mathbf{r}^{\tau})$,
we compare the objective function values of $\delta$ and $\hat{\delta}$,
\begin{eqnarray}
\nonumber C'\hat{\delta}-C'\delta=\theta C'd.
\end{eqnarray}
Here, we use the notation $(\mathbf{s}^{\tau},\mathbf{r}^{\tau})^c$ to shorten the expression of $(\mathbf{s},\mathbf{r}) \neq (\mathbf{s}^{\tau},\mathbf{r}^{\tau})$.
Since $\theta>0$, we conclude that $d$ is not a decent direction when $C'd>0$. For each $\tau \neq 0$, from the feasibility of $\delta$ and $\hat{\delta}$, we have
\begin{align}\label{Eq_A1}
\nonumber
\sum_{\mathbf{s}}\sum_{\mathbf{r}}\delta_{(\tau,\mathbf{s},\mathbf{r})}
&=\sum_{\mathbf{s}}\sum_{\mathbf{r}}\hat{\delta}_{(\tau,\mathbf{s},\mathbf{r})}\\
&=\sum_{\mathbf{s}}\sum_{\mathbf{r}}(\delta_{(\tau,\mathbf{s},\mathbf{r})} +\theta d_{(\tau,\mathbf{s},\mathbf{r})}).
\end{align}
Reformulating Equation (\ref{Eq_A1}), we have
\begin{equation}\label{Eq_A2}
d_{(\tau,\mathbf{s}^{\tau},\mathbf{r}^{\tau})}+\sum_{(\mathbf{s}^{\tau},\mathbf{r}^{\tau})^c}d_{(\tau,\mathbf{s},\mathbf{r})}=0.
\end{equation}
We are now in a position to show the local optimality of $\delta$, i.e.,
$C'd>0$. We first split $C'd$ into three parts:
\begin{align}
\nonumber C'd
\nonumber &=\sum_{\tau\neq 0} C_{(\tau,\mathbf{s}^{\tau},\mathbf{r}^{\tau})}d_{(\tau,\mathbf{s}^{\tau},\mathbf{r}^{\tau})}+\sum_{\tau\neq 0}\sum_{(\mathbf{s}^{\tau},\mathbf{r}^{\tau})^c}C_{(\tau,\mathbf{s},\mathbf{r})}d_{(\tau,\mathbf{s},\mathbf{r})}\\
\nonumber &~~ +\sum_{\mathbf{s}}\sum_{\mathbf{r}}C_{(0,\mathbf{s},\mathbf{r})}d_{(0,\mathbf{s},\mathbf{r})}.
\end{align}
By assumptions $(\ref{Eq_3.41})$ and $(\ref{Eq_3.42})$, we conclude that
\begin{align}
\nonumber C'd
&=\sum_{\tau\neq 0} \{ C_{(\tau,\mathbf{s}^{\tau},\mathbf{r}^{\tau})}d_{(\tau,\mathbf{s}^{\tau},\mathbf{r}^{\tau})} +\sum_{(\mathbf{s}^{\tau},\mathbf{r}^{\tau})^c}C_{(\tau,\mathbf{s},\mathbf{r})}d_{(\tau,\mathbf{s},\mathbf{r})} \}\\
\nonumber
&>\sum_{\tau\neq 0}\{ C_{(\tau,\mathbf{s}^{\tau},\mathbf{r}^{\tau})}
(d_{(\tau,\mathbf{s}^{\tau},\mathbf{r}^{\tau})}+\sum_{(\mathbf{s}^{\tau},\mathbf{r}^{\tau})^c} d_{(\tau,\mathbf{s},\mathbf{r})} ) \}\\
\nonumber
&=0,
\end{align}
where the last equality follows from $({\ref{Eq_A2}})$. The fact that $C'd>0$ means that $d$ is not a descent direction, which also implies that a possible descent direction $d$ should satisfy $d_{(0,\mathbf{s},\mathbf{r})}>0$. Actually, from $C_{(0,\mathbf{s},\mathbf{r})}=0$, a feasible direction $d$ with $d_{(0,\mathbf{s},\mathbf{r})}>0$ cannot change the objective function, i.e., $C_{(0,\mathbf{s},\mathbf{r})}d_{(0,\mathbf{s},\mathbf{r})}=0$. Thus, we conclude that $\delta$ is a locally optimal solution of problem $({\ref{Eq_3.21}})$.

Since a locally optimal solution is also a globally optimal solution in LP, the proof of the equivalence of the optimal values of the two problems is complete. It should be noted that $\delta$ is not a unique globally optimal solution of problem $({\ref{Eq_3.21}})$.

\section{Proof of Corollary 1}
Without loss of generality, we assume that all the targets are ``uncoupled'' at time step $k$, i.e.,
\begin{align}
    \nonumber&\mathbf{X}_{k} = \left((\mathbf{x}_{k}^1)',\cdots,(\mathbf{x}_{k}^T)'\right)',\\
    \nonumber&\mathbf{P}_{k|k} = \text{diag}(\mathbf{P}_{k|k}^1,\cdots,\mathbf{P}_{k|k}^T).
\end{align}
Since the number of valid measurements is $R_1$, and the number of valid measurements that are shared with different targets is $\hat{R}_1$, we conclude that there are $R_1 - \hat{R}_1$ valid measurements that are associated with targets with probability 1. Without loss of generality, we assume that the first $R_1 - \hat{R}_1$ targets that are associated to different measurements with probability $1$.
Since the most computationally intensive part of the RCMKF is the matrix inverse in $(\ref{Eq_3.39})$, we need to verify that the $(\bar{\mathbf{h}}_{k+1}\mathbf{P}_{k+1|k}\bar{\mathbf{h}}_{k+1}'+\widetilde{\mathbf{R}}_{k+1})$ is block diagonal with the following form,
\begin{equation}\label{Eq_A3}
\left(
\begin{array}{cccc}
A_1 &  &  & \\
    & \cdots &  & \\
    &  & A_{R_1-\hat{R}_1} & \\
    &  &  & B
\end{array}
\right),
\end{equation}
where $A_i$ is a square matrix with size $m$, for $i = 1,\cdots, R_1 - \hat{R}_1$, $B$ is a square matrix with size $(\hat{R}_1\cdot m)$.
In addition, it is easy to check that $\mathbf{R}_{k+1}$ and $E(\tilde{\mathbf{h}}_{k+1}E(\mathbf{X}_{k+1}\mathbf{X}_{k+1}')\tilde{\mathbf{h}}_{k+1}')$ are block diagonal each entry of which is a square matrix with size $m$.
Thus, if we prove that $\bar{\mathbf{h}}_{k+1}\mathbf{P}_{k+1|k}\bar{\mathbf{h}}_{k+1}'$ is in the form of $(\ref{Eq_A3})$, then the computational complexity is $O((R_1-\hat{R}_1)\cdot m^3 + (\hat{R}_1\cdot m)^3)$.

Now, we are in a position to prove that $\bar{\mathbf{h}}_{k+1}\mathbf{P}_{k+1|k}\bar{\mathbf{h}}_{k+1}'$ is in the form of $(\ref{Eq_A3})$.
From $(\ref{Eq_3.38})$ and the definition of $\mathbf{F}_k$, $\mathbf{P}_{k|k}$ and $\mathbf{Q}_k$, it is easy to verify that $\mathbf{P}_{k+1|k}$ is block diagonal. By the above assumptions, we have the analytical expression of $\bar{\mathbf{h}}_{k+1}$ as follows,
\begin{equation}\nonumber
\bar{\mathbf{h}}_{k+1}=\left(
\begin{array}{cccc}
H &  &  & \\
    & \cdots &  & \\
    &  & H & \\
    &  &  & \bar{\mathbf{h}}
\end{array}
\right),
\end{equation}
where $\bar{\mathbf{h}}$ is the mean of measurement matrix of the last $\hat{R}_1$ valid measurements. Thus, $\bar{\mathbf{h}}_{k+1}\mathbf{P}_{k+1|k}\bar{\mathbf{h}}_{k+1}'$ is in the form of (\ref{Eq_A3}), we conclude that the computational complexity is $O((R_1-\hat{R}_1)\cdot m^3 + (\hat{R}_1\cdot m)^3)$.
Moreover, if the conditions of Proposition \ref{pro_2} holds, i.e., $\hat{R}_1=0$, then the computational complexity is $O(R_1\cdot m^3)$.


\begin{thebibliography}{10}

\bibitem{bar1990multitarget}
Y.~{Bar-Shalom}, {\em Multitarget-multisensor tracking: advanced applications}.
\newblock Norwood, MA: Artech House, 1990.

\bibitem{Vo2015Multitarget}
B.-N. Vo, M.~Mallick, Y.~Bar-Shalom, S.~Coraluppi, R.~Osborne, R.~P.~S. Mahler,
  and B.-T. Vo, ``Multitarget tracking,'' {\em Wiley Encyclopedia of Electrical
  and Electronics Engineering}, Sept. 2015.

\bibitem{Meyer2018}
F.~Meyer, T.~Kropfreiter, J.~L. Williams, R.~Lau, F.~Hlawatsch, P.~Braca, and
  M.~Z. Win, ``Message passing algorithms for scalable multitarget tracking,''
  {\em Proceedings of the IEEE}, vol.~106, pp.~221--259, Feb. 2018.

\bibitem{Zhu2002}
Y.~M. Zhu, {\em Multisensor decision and estimation fusion}.
\newblock Norwell, MA: Kluwer Academic Publishers, 2003.

\bibitem{Mori2018_ICIF}
C.~{Chong}, S.~{Mori}, and D.~B. {Reid}, ``Forty years of multiple hypothesis
  tracking - a review of key developments,'' in {\em 2018 21st International
  Conference on Information Fusion (FUSION)}, pp.~452--459, 2018.

\bibitem{singer1974derivation}
R.~A. Singer, R.~G. Sea, and K.~B. Housewright, ``Derivation and evaluation of
  improved tracking filter for use in dense multitarget environments,'' {\em
  IEEE Transactions on Information Theory}, vol.~20, pp.~423--432, July 1974.

\bibitem{Blackman1986Multiple}
S.~S. Blackman, ``Multiple-target tracking with radar applications,'' {\em
  Dedham}, 1986.

\bibitem{Bar1975Tracking}
Y.~{Bar-Shalom} and E.~{Tse}, ``Tracking in a cluttered environment with
  probabilistic data association,'' {\em Automatica}, vol.~11, no.~5,
  pp.~451--460, 1975.

\bibitem{bar1983consistency}
Y.~{Bar-Shalom} and K.~{Birmiwal}, ``Consistency and robustness of {PDAF} for
  target tracking in cluttered environments,'' {\em Automatica}, vol.~19,
  no.~4, pp.~431--437, 1983.

\bibitem{BarShalom1984JPDA}
K.-C. {Chang} and Y.~{Bar-Shalom}, ``Joint probabilistic data association for
  multitarget tracking with possibly unresolved measurements and maneuvers,''
  {\em IEEE Transactions on Automatic Control}, vol.~29, pp.~585--594, July
  1984.

\bibitem{BarShalom2011Tracking}
Y.~{Bar-Shalom}, P.~{Willett}, and X.~{Tian}, {\em Tracking and data fusion: a
  handbook of algorithms}.
\newblock YBS Publishing, 2011.

\bibitem{li1999an}
X.~L. {Li}, Z.-Q. {Luo}, K.~M. {Wong}, and E.~{Bosse}, ``An interior point
  linear programming approach to two-scan data association,'' {\em IEEE
  Transactions on Aerospace and Electronic Systems}, vol.~35, no.~2,
  pp.~474--490, 1999.

\bibitem{TAC2009Oh}
S.~{Oh}, S.~{Russell}, and S.~{Sastry}, ``Markov chain {Monte Carlo} data
  association for multi-target tracking,'' {\em IEEE Transactions on Automatic
  Control}, vol.~54, pp.~481--497, Mar. 2009.

\bibitem{williams2014approximate}
J.~L. Williams and R.~A. Lau, ``Approximate evaluation of marginal association
  probabilities with belief propagation,'' {\em IEEE Transactions on Aerospace
  and Electronic Systems}, vol.~50, pp.~2942--2959, Dec. 2014.

\bibitem{Luo2012Novel}
Y.~T. Luo, Y.~M. Zhu, X.~J. Shen, and E.~B. Song, ``Novel data association
  algorithm based on integrated random coefficient matrices {Kalman}
  filtering,'' {\em IEEE Transactions on Aerospace and Electronic Systems},
  vol.~48, pp.~144--158, Jan. 2012.

\bibitem{Mahler2003phd}
R.~P.~S. {Mahler}, ``Multitarget {B}ayes filtering via first-order multitarget
  moments,'' {\em IEEE Transactions on Aerospace and Electronic Systems},
  vol.~39, pp.~1152--1178, Oct. 2003.

\bibitem{TSP2007VoCPHD}
B.-T. {Vo}, B.-N. {Vo}, and A.~{Cantoni}, ``Analytic implementations of the
  cardinalized probability hypothesis density filter,'' {\em IEEE Transactions
  on Signal Processing}, vol.~55, pp.~3553--3567, July 2007.

\bibitem{TSP2011MahlerCPHD}
R.~P.~S. {Mahler}, B.-T. {Vo}, and B.-N. {Vo}, ``{CPHD} filtering with unknown
  clutter rate and detection profile,'' {\em IEEE Transactions on Signal
  Processing}, vol.~59, pp.~3497--3513, Aug. 2011.

\bibitem{TSP2014VoLMBF}
S.~{Reuter}, B.-T. {Vo}, B.-N. {Vo}, and K.~{Dietmayer}, ``The labeled
  multi-{Bernoulli} filter,'' {\em IEEE Transactions on Signal Processing},
  vol.~62, pp.~3246--3260, June 2014.

\bibitem{hoang2014sensor}
H.~G. {Hoang} and B.-T. {Vo}, ``Sensor management for multi-target tracking via
  multi-{Bernoulli} filtering,'' {\em Automatica}, vol.~50, no.~4,
  pp.~1135--1142, 2014.

\bibitem{TSP2018VoLRFS}
Y.~G. {Punchihewa}, B.-T. {Vo}, B.-N. {Vo}, and D.~Y. {Kim}, ``Multiple object
  tracking in unknown backgrounds with labeled random finite sets,'' {\em IEEE
  Transactions on Signal Processing}, vol.~66, pp.~3040--3055, June 2018.

\bibitem{TSP2018VoGLMB}
D.~S. {Bryant}, B.-T. {Vo}, B.-N. {Vo}, and B.~A. {Jones}, ``A generalized
  labeled multi-{Bernoulli} filter with object spawning,'' {\em IEEE
  Transactions on Signal Processing}, vol.~66, pp.~6177--6189, Dec. 2018.

\bibitem{Vo2005SMCRFS}
B.-N. {Vo}, S.~{Singh}, and A.~{Doucet}, ``Sequential {Monte Carlo} methods for
  multitarget filtering with random finite sets,'' {\em IEEE Transactions on
  Aerospace and Electronic Systems}, vol.~41, pp.~1224--1245, Oct. 2005.

\bibitem{TSP2006VoGMPHD}
B.-N. {Vo} and W.-K. {Ma}, ``The {Gaussian} mixture probability hypothesis
  density filter,'' {\em IEEE Transactions on Signal Processing}, vol.~54,
  pp.~4091--4104, Nov. 2006.

\bibitem{TSP2017VoGLMB}
B.-N. {Vo}, B.-T. {Vo}, and H.~G. {Hoang}, ``An efficient implementation of the
  generalized labeled multi-{Bernoulli} filter,'' {\em IEEE Transactions on
  Signal Processing}, vol.~65, pp.~1975--1987, Apr. 2017.

\bibitem{Blackman2004MHT}
S.~S. {Blackman}, ``Multiple hypothesis tracking for multiple target
  tracking,'' {\em IEEE Aerospace and Electronic Systems Magazine}, vol.~19,
  no.~1, pp.~5--18, 2004.

\bibitem{Reid1979An}
D.~Reid, ``An algorithm for tracking multiple targets,'' {\em IEEE Transactions
  on Automatic Control}, vol.~24, pp.~1202--1211, Apr. 1979.

\bibitem{Kuiren1990Issues}
T.~{Kurien}, ``Issues in the design of practical multitarget tracking
  algorithms,'' in {\em Multitarget-Multisensor Tracking : Advanced
  Applications} (Y.~Bar-Shalom, ed.), ch.~3, pp.~43--83, Artech House, 1990.

\bibitem{Mallick2012MHT}
M.~{Mallick}, S.~{Coraluppi}, and C.~{Carthel}, ``Multitarget tracking using
  multiple hypothesis tracking,'' in {\em Integrated Tracking, Classification,
  and Sensor Management} (M.~Mallick, V.~Krishnamurthy, and B.-N. Vo, eds.),
  ch.~5, pp.~165--201, Wiley/IEEE, Dec. 2012.

\bibitem{Poore1994Multidimensional}
A.~B. {Poore}, ``Multidimensional assignment formulation of data association
  problems arising from multitarget and multisensor tracking,'' {\em
  Computational Optimization \& Applications}, vol.~3, pp.~27--57, Mar. 1994.

\bibitem{Pattipati2000}
K.~R. {Pattipati}, R.~L. {Popp}, and T.~T. {Kirubarajan}, ``Survey of
  assignment techniques for multitarget tracking,'' in {\em
  Multitarget-Multisensor Tracking: Applications and Advances} (Y.~Bar-Shalom
  and D.~Blair, eds.), vol.~3, ch.~2, pp.~77--159, Norwood, MA, USA: Artech
  House, 2000.

\bibitem{Poore1997A}
A.~B. Poore and A.~J. Robertson~{III}, ``A new {Lagrangian} relaxation based
  algorithm for a class of multidimensional assignment problems,'' {\em
  Computational Optimization \& Applications}, vol.~8, pp.~129--150, Sept.
  1997.

\bibitem{Coraluppi2000}
S.~{Coraluppi}, C.~{Carthel}, M.~{Luettgen}, and S.~{Lynch}, ``All-source track
  and identity fusion,'' in {\em MSS National Symposium on Sensor and Data
  Fusion}, (San Antonio TX), June 2000.

\bibitem{Storms2003An}
P.~P.~A. Storms and F.~C.~R. Spieksma, ``An {LP}-based algorithm for the data
  association problem in multitarget tracking,'' {\em Computers \& Operations
  Research}, vol.~30, no.~7, pp.~1067--1085, 2003.

\bibitem{PMHT1995}
R.~L. {Streit} and T.~E. {Luginbuhl}, ``Probabilistic multi-hypothesis
  tracking,'' {\em Technical Report, NUWC-NPT}, vol.~428, Feb. 1995.

\bibitem{willett2002pmht:}
P.~Willett, Y.~H. Ruan, and R.~L. Streit, ``{PMHT}: {Problems} and some
  solutions,'' {\em IEEE Transactions on Aerospace and Electronic Systems},
  vol.~38, no.~3, pp.~738--754, 2002.

\bibitem{IMM1988BlomBarShalom}
H.~A.~P. {Blom} and Y.~{Bar-Shalom}, ``The interacting multiple model algorithm
  for systems with {Markovian} switching coefficients,'' {\em IEEE Transactions
  on Automatic Control}, vol.~33, pp.~780--783, Aug. 1988.

\bibitem{bar2004estimation}
Y.~{Bar-Shalom}, X.~R. {Li}, and T.~{Kirubarajan}, {\em Estimation with
  applications to tracking and navigation: theory algorithms and software}.
\newblock John Wiley \& Sons, 2004.

\bibitem{li2005survey}
X.~R. {Li} and V.~P. {Jilkov}, ``Survey of maneuvering target tracking. {Part
  V}: {Multiple}-model methods,'' {\em IEEE Transactions on Aerospace and
  Electronic Systems}, vol.~41, no.~4, pp.~1255--1321, 2005.

\bibitem{ho2011switched}
T.-J. {Ho}, ``A switched {IMM}-extended {Viterbi} estimator-based algorithm for
  maneuvering target tracking,'' {\em Automatica}, vol.~47, no.~1, pp.~92--98,
  2011.

\bibitem{houles1989multisensor}
A.~{Houles} and Y.~{Bar-Shalom}, ``Multisensor tracking of a maneuvering target
  in clutter,'' {\em IEEE Transactions on Aerospace and Electronic Systems},
  vol.~25, pp.~176--189, Mar. 1989.

\bibitem{blom2006exact}
H.~A.~P. {Blom} and E.~A. {Bloem}, ``Exact {Bayesian} filter and joint {IMM}
  coupled {PDA} tracking of maneuvering targets from possibly missing and false
  measurements,'' {\em Automatica}, vol.~42, no.~1, pp.~127--135, 2006.

\bibitem{Bar1991IMM}
Y.~{Bar-Shalom}, K.~C. {Chang}, and H.~A.~P. {Blom}, ``Tracking of splitting
  targets in clutter using an interacting multiple model joint probabilistic
  data association filter,'' in {\em [1991] Proceedings of the 30th IEEE
  Conference on Decision and Control}, vol.~2, pp.~2043--2048, Dec. 1991.

\bibitem{blackman1999immmht}
S.~S. {Blackman}, R.~J. {Dempster}, M.~T. {Busch}, and R.~F. {Popoli},
  ``{IMM/MHT} solution to radar benchmark tracking problem,'' {\em IEEE
  Transactions on Aerospace and Electronic Systems}, vol.~35, pp.~730--738,
  Apr. 1999.

\bibitem{SchuhmacherA}
D.~Schuhmacher, B.-T. Vo, and B.-N. Vo, ``A consistent metric for performance
  evaluation of multi-object filters,'' {\em IEEE Transactions on Signal
  Processing}, vol.~56, no.~8, pp.~3447--3457, 2008.

\bibitem{Yi2017Random}
Y.~Zhang, X.~J. Shen, Z.~G. Wang, and Y.~M. Zhu, ``Random {MHT} data
  association algorithm based on random coefficient {Kalman} filter,'' in {\em
  International Conference on Information Fusion}, Aug. 2017.

\bibitem{Koning1984Optimal}
W.~L. {De Koning}, ``Optimal estimation of linear discrete-time systems with
  stochastic parameters,'' {\em Automatica}, vol.~20, pp.~113--115, Jan. 1984.

\bibitem{Mallick2008Comparison}
M.~{Mallick} and B.~{La Scala}, ``Comparison of single-point and two-point
  difference track initiation algorithms using position measurements,'' {\em
  Acta Automatica Sinica}, vol.~34, pp.~258--265, Mar. 2008.

\end{thebibliography}
\end{document}